\documentclass[a4paper,babel]{article}
\usepackage{cite}
\usepackage{amsmath,amssymb,amsfonts,amsthm}
\usepackage{graphicx}

\usepackage{enumerate}
\usepackage{mathtools}
\usepackage[dvipsnames]{xcolor}
\mathtoolsset{showonlyrefs,showmanualtags}

\usepackage{a4wide}

\usepackage{enumerate}

\newcommand{\id}{\boldsymbol{I}}

\newcommand{\bu}{\boldsymbol{u}}

\newcommand{\bx}{\boldsymbol{x}}
\newcommand{\bb}{\boldsymbol{b}}

\newcommand{\llangle}{\langle\!\langle}
\newcommand{\rrangle}{\rangle\!\rangle}
\newcommand{\brho}{\boldsymbol{\rho}}

\newcommand{\Hdue}[1]{h_{#1}}

\newcommand{\Hse}[1]{M_{#1}}

\newcommand{\proj}{\mathbb{P}}

\newcommand{\pur}[1]{ {#1} {#1}^{\dagger}}
\newcommand{\cappa}{\kappa}
\newcommand{\bV}{\mathcal{W}}

\newcommand{\probbo}{\textbf{(P)}}
\newcommand{\frulli}{\omega}

\newtheorem{remark}{Remark}
\newtheorem{proposition}{Proposition}
\newtheorem{definition}{Definition}

\newtheorem{lem}{Lemma}
\newtheorem{theorem}{Theorem}

\begin{document}

\title{Indirect stabilization of $N$-level quantum systems}
\author{Francesca Carlotta Chittaro
\thanks{Francesca C. Chittaro is with  Department of
Information Engineering and Computer Science, Università di Trento, 38123 Trento, Italy
  (e-mail: francesca.chittaro@unitn.it).}}
\date{}

\maketitle

\begin{abstract}
We study the stabilizability of a $N$-level quantum systems using an indirect control architecture mediated by a noisy ancilla. After outlining the general theoretical framework for arbitrary $N$, we focus on the two-level case ($N=2$), for which we derive a complete set of necessary and sufficient conditions for asymptotic stability. Finally, we illustrate the analytical results with a simple numerical example on a two-spin system, showing convergence to a Bell state for a set of random initial conditions.
\end{abstract}

\section{Introduction}

Dissipation engineering \cite{reservoir} has fundamentally shifted the paradigm of open quantum systems control by turning environmental noise into a resource. By properly designing a system's interaction with its surroundings, environmental dissipation can be engineered to perform tasks such as autonomous state stabilization, rapid state initialization, and entanglement generation (see, for instance, \cite{MiLeVo13,PhysRevLett.119.150502,ShaHaLeal13,HaMuMu22,SchirmerWang10,TiSchiWa-TAC}). 

Within this framework, this paper provides a theoretical characterization of static state stabilization via a noisy ancilla. Specifically, we consider an unmonitored bipartite configuration where the system of interest is coupled to a (possibly controlled) dissipative ancillary qubit. In this setting, both the coherent control actions and the Markovian environmental dissipation are directly coupled to the ancilla only. Our aim consists in determining the structural conditions under which the localized ancilla-system interaction can successfully drive the system of interest toward a desired pure state, without performing active measurements or feedback loops.

Our approach relies on a geometric analysis of the equilibrium set of the open quantum dynamics, by studying how the ancilla-system coupling shapes the fixed-point set and deriving explicit structural constraints that guarantee the existence of an attractive target equilibrium.

Building upon the preliminary results in \cite{C26}, which were restricted to qubit systems and target states aligned with the internal system Hamiltonian, this paper addresses the problem in a more general setting. We consider target systems of arbitrary finite dimension and arbitrary pure target states, deriving general necessary conditions for indirect stabilization.

Stabilizing a system toward an equilibrium is a central problem in both classical and quantum systems theory. Specifically in the quantum realm, 
the problem of stabilizing the state of subsystems is crucial. From a different viewpoint, this problem has been addressed in a series of paper by Ticozzi and collaborators \cite{TiVi08,TiVi09,TiSchiWa-TAC,TiLuCapVi12}. In these papers, the focus was on the stabilization of states supported by {\it subspaces} of the Hilbert space, instead of the stabilization of reduced states. Another key difference resides in the fact that the authors allow the use of feedback control, while here we consider static control only and we do not measure the system.

For results in infinite dimensions, we cite~\cite{RoRoSeIHP}, which investigates state stabilization (via reservoir engineering) for a system of two bosonic modes.

%
%
%
The structure of the paper is the following: in Section~\ref{sec: system} we define the mathematical framework and state the main hypotheses; in Section~\ref{sec: coherence rep}, we introduce the coherence vector representation, a standard algebraic tool that permits to write the evolution of the quantum system as a bilinear system on the real Euclidean space of appropriate dimension. 

Section~\ref{sec: any dim} contains some of the main results of the paper: in particular, we prove that, regardless the dimension of the system, asymptotic stabilization towards a pure state is possible if and only if the dynamics admits a unique equilibrium, which is shown to be a pure factorized state of the composite system. Further,
we establish explicit necessary conditions on the ancilla-system coupling for the existence of such an equilibrium. 

Because the classification of equilibria for generic $N$ becomes analytically intractable, due to the rapid growth of the dimension, Section~\ref{sec: qubit} restricts the analysis to two-level systems ($N=2$). 
In this case,  we are able to 
provide a set of necessary and sufficient conditions for the asymptotic stabilization of the system towards the target state.

In Section~\ref{sec: example}, we illustrate our result by a numerical example, in which we aim at stabilizing the system of interest, composed of two qubits, towards a Bell state.

\section{Statement of the problem} \label{sec: preliminaries}
\label{sec: system}

We consider the following composite system: a finite dimensional quantum system $\bf S$ interacts with a qubit $\bf A$, called {\it ancilla}, which is subject to dissipation (interaction with the environment) and on which acts a coherent control \cite{altaf-tic-12}.
The state of such a system is described by a density matrix
$\brho$ (i.e., a Hermitian positive semi-definite matrix of trace 1) on the tensor product 
$\mathcal{H} = \mathbb{C}^2 \otimes \mathbb{C}^N$ ($N$ denotes the dimension of \textbf{S}). 
We denote with $\mathcal{P}$ the space of density matrices on $\mathcal{H}$. It is well known \cite{nielsen-chuang} that $\mathcal{P}$ is a compact convex subset of the space of Hermitian matrices on $\mathcal{H}$. The constraints on $\brho$ impose that the trace of $\brho^2$ is always less than or equal to 1. States for which $\brho^2$ has unit trace are of utmost importance \cite{nielsen-chuang} and are called {\it pure states}.

The state $\rho_S$ of the system of interest \textbf{S} (respectively, the state $\rho_A$ of the ancilla) can be recovered from $\brho$ by means of the {\it  partial trace  over $\bf A$} (respectively, the partial trace over $\bf S$) 
defined as follows: for every $m\geq 2$, let $her(m)$ denote the space of Hermitian $m$-dimensional matrices; the partial trace of {\bf A} is 
the  unique linear operator  
tr$_A: her(2N) \to her(N)$  such that every $M_A\in her(2)$ and every $M_S\in her(N)$ it holds tr$_A (M_A\otimes M_S)=M_S({\rm tr} M_A)$, where tr denotes the usual trace operation on $her(2)$. The partial trace over {\bf S} is defined analogously (see \cite{nielsen-chuang}). 

A rigorous description of the dynamics of the composite system should also take into account its interaction with the surrounding environment, usually modeled as an infinite collection of bosons \cite{breuer-petr} (but other models are available). 
Such problem belongs to the realm of Mathematical Physics and goes beyond the scope of this work\footnote{For a similar stabilization problem within this fundamental approach see~\cite{FroSchu16}.}. 
Here, we adopt the standard effective framework provided by the Markovian Gorini-Kossakowski-Sudarshan-Lindblad
 master equation
(or GKSL, see~\cite{gorini-kos-sud,lindblad}): the dynamics of the systems is governed by the equation  
\begin{equation} \label{eq: GKSL}
\dot{\brho}(t) = -i [H, \brho(t)] + \mathcal{D}(\brho(t)),
\end{equation}
where $H$ is a 
Hermitian operator on $\mathcal{H}$ and the linear operator $\mathcal{D}$ can be written as 
\begin{equation}\label{eq: lind}
\mathcal{D}(\brho) = \sum_{k=1}^{m}\mathcal{D}_{L_k}(\brho)= \sum_{k=1}^{m} L_k \brho L^{\dagger}_k - \frac{1}{2} L^{\dagger}_k  L_k  \brho- \frac{1}{2} \brho L^{\dagger}_k L_k,
\end{equation}
the $L_k$ being square matrices called {\it noise (or jump) operators} and $m\leq (N+2)^2-1$.

It is well known that different choices of the operators $(H,L_1,\ldots,L_{m})$ can generate the same evolution semigroup. First, adding any multiple of the identity to $H$  leaves the right-hand side of \eqref{eq: GKSL} invariant. Analogously, as $\mathcal{D}_{\alpha F}=|\alpha|^2\mathcal{D}_{F}$, multiplying any noise operator by a unit-norm complex number does not change the right-hand side of \eqref{eq: GKSL}. Then, notice that, for every operator $L$ and every $\alpha\in \mathbb{C}$, we have
\[\mathcal{D}_{\alpha \id + L}=\mathcal{D}_L-i\Big[\frac{i}{2}(\alpha L^{\dagger}-\alpha^*L),\cdot\Big],\] 
so that the (Hermitian) term $\frac{i}{2}(\alpha L^{\dagger}-\alpha^*L)$ can be reabsorbed into $H$. 
Finally, even fixing the Hamiltonian, different sets of noise operators can generate the same $\mathcal{D}$ 
(for more details, see~\cite{alicki-fannes,breuer-petr}).

\smallskip
In the case under study, the assumptions that the subsystem $\bf S$ is not  directly coupled neither with the environment nor with the control translate in the following mathematical properties:
\begin{itemize}
 \item[(D)]  
 the noise operators can be taken of the form $L_k=\ell_k\otimes\id_N$, where the $\ell_k$'s are 2-dimensional complex matrices, $k\leq 3$, and $\id_N$ denotes the $N$-dimensional unity matrix.
 \item[(H1)] 
 the Hamiltonian $H$ is an affine function of the controls and can be written as 
\[
H(\bu)=H_0+\sum_{j=1}^{m_c} u_j \Hdue{A_j}\otimes \id_N,
\]
where $H_0$  is a constant $2N$-dimensional Hermitian matrix, $ \Hdue{A_j}$ are constant 2-dimensional Hermitian matrices and $u_j$ are real (possibly time-dependent) parameters.
\end{itemize}

\begin{remark}
{\it Coherent control} refers to the situation in which one can manipulate the state of the system by applying semiclassical potentials \cite{altaf-tic-12}. Such potentials are described by the Hamiltonians $H_j$, $j\geq 1$.
\end{remark}
  
Exploiting the properties of GKSL equation, we  can assume that $H_0$ and $h_{A_j}$, $j\geq 1$, are traceless. Also, as the space of traceless Hermitian 2-dimensional matrices has dimension 3, we can assume that 
we have only three controls.  
To fix notations, we write the Hamiltonian $H_0$ as 
\[
H_0=h_A\otimes \id_N + H_I + \id_2\otimes h_S,
\]
where ${\rm tr}_A H_I={\rm tr}_SH_I=0$.
$h_A$ denotes the internal evolution of the ancilla, $h_S$ the evolution of the $N$-level system {\bf S}, and $H_I$ represents the interaction between the ancilla and the system of interest (sometimes we will refer to $H_I$ as to the {\it coupling}). Notice that, for every fixed $H_0$, this decomposition is unique.

We can also simplify the expression of the noise operators: indeed, in \cite[Proposition~9]{C26} it has been shown that the system {\bf S} cannot be {\it globally} asymptotically stabilized to the target set if the matrices $\ell_j$, $j=1,2,3$ are not simultaneously triangularizable. Then, we assume that there exists a change of basis in $\mathbb{C}^2$ such that, for every $j=1,2,3$, $\ell_j$ has the form 
\begin{equation} \label{eq: jump triang}
\ell_j=\begin{pmatrix} a_j & b_j e^{-i\chi_j}\\ 0 & -a_j\end{pmatrix},\end{equation}
where $a_j,b_j$ and $\chi_j$ are real parameters.  

We are now ready to formally state the stabilization problem under study. 

\smallskip
\noindent
\probbo\ {\it 
Let $h_A,h_S,h_{A_j}$ and $\ell_j$ be fixed. 
Given a unit-norm $\psi_*\in \mathbb{C}^N$, is it possible to design a coupling $H_I$ and to choose fixed values for the controls $u_j$ such that all solutions of \eqref{eq: GKSL} are asymptotically driven towards the set
$\mathcal{W}_{\psi_*}=\{\brho: {\rm tr}_{A}\rho=\psi^{\dagger}_*\psi_*\}$?}

\subsection{Notations}

When dealing with quantum mechanics, it is mostly convenient to adopt the {\it bra-ket} notation. In this framework, {\it kets} are denoted with $|\psi \rangle$ and correspond to vectors ($\psi$) of the Hilbert space; the corresponding {\it bra} $\langle \psi|$ just denotes the conjugate transpose of $\psi$, i.e. $\psi^{\dagger}$. 
When $\psi$ is a unit vector, $|\psi\rangle \langle \psi|$ is simply the orthogonal projection on the span of $\psi$. If $e_i,e_j$, for $1\leq i\leq j\leq n$, are  elements of the canonical basis of $\mathbb{C}^n$, then $|e_i\rangle \langle e_j|$ denotes the matrix $M$ whose  only non-zero element is $M_{ij}=1$.

\section{Coherence vector representation} \label{sec: coherence rep}

Originally developed for qubit systems, and known as {\it Bloch representation},
coherence vector representation  
is an isomorphism from the set of density matrices (on $\mathbb{C}^n$, for some $n\geq 2$) to a suitable subset of $\mathbb{R}^{n^2}$. 

To define such isomorphism, we start by endowing the space of $n$ dimensional Hermitian matrices $her(n)$ with the Frobenius scalar product $\llangle A,B\rrangle={\rm tr}\big(A^{\dagger}B\big)$, then we choose a orthonormal basis $\{\Lambda_0,\ldots,\Lambda_{n^2-1}\}$ of $her(n)$.  
As we focus on density matrices (i.e., with unit trace), a common choice is setting $\Lambda_0=\id_n/\sqrt{n}$, and choosing the other $n^2-1$ elements as traceless Hermitian matrices. 

Once the basis fixed,
the coherence representation is given by the map $\Phi$ defined as  
\begin{align}
\Phi: \mathcal{P}&\to \mathbb{R}^{n^2}\\
\brho &\mapsto\overline{\bx^{\rho}}=\big({\rm tr}(\brho \Lambda_0),\ldots,{\rm tr}(\brho \Lambda_{n^2-1})\big).
\end{align}
As ${\rm tr}(\brho \Lambda_0)=1/\sqrt{n}$ for every $\brho$, the image of $\mathcal{P}$ under $\Phi$ lies in the affine subspace of $\mathbb{R}^{n^2}$ of vectors whose first component is $1/\sqrt{n}$. More precisely, $\Phi(\mathcal{P})$ is a compact convex subset of such affine space. We recall that the boundary of $\Phi(\mathcal{P})$ contains the vectors associated with singular density matrices, and that the extreme points of the convex set $\Phi(\mathcal{P})$ are precisely the vectors corresponding to pure states \cite{avron}.

For 1 qubit ($n=2$), the natural choice 
is $\Lambda_j=\sigma_j/\sqrt{2}$, $j=1,2,3$, where $\sigma_j$ are
the Pauli matrices
\[
\sigma_1=\left(\begin{smallmatrix}
               0 & 1 \\ 1 & 0
              \end{smallmatrix}\right)\quad 
              \sigma_2=\left(\begin{smallmatrix}
               0 & -i \\ i & 0
              \end{smallmatrix}\right)\quad
\sigma_3=\left(\begin{smallmatrix}
               1 & 0 \\ 0 & -1
              \end{smallmatrix}\right).              
\]
As the first component of $\overline{\bx^{\rho}}$ is always $1/\sqrt{2}$, one usually writes $\overline{\bx^{\rho}}=\big(1/\sqrt{2},\bx^{\rho}\big)$, where $\bx^{\rho}$ is called the
{\it Bloch vector}.
 It is easy to see that $\|\overline{\bx^{\rho}}\|^2={\rm tr}\brho^2$, so that $\bx^{\rho}$ belongs to the ball of radius $1/\sqrt{2}$ (the {\it Bloch ball}). In particular, a Bloch vector represents a pure state if and only if $\|\bx^{\rho}\|^2=1/2$~\footnote{Other normalization choice are also adopted, in which the Bloch ball coincides with the unit ball in $\mathbb{R}^3$.}. 

Actually, the equality $\|\overline{\bx^{\rho}}\|^2={\rm tr}\brho^2$ holds true for every $n$. 
Choosing $\Lambda_0=\id_n/\sqrt{n}$ and $\Lambda_j$ traceless for $j\geq 1$ allows us to write the $n^2$-dimensional vector $\overline{\bx^{\rho}}$ can as $\overline{\bx^{\rho}}=\big(1/\sqrt{n},\bx^{\rho}\big)$, where $\bx^{\rho}$
is called the {\it vector of coherences} of $\brho$. 
Thus, one always has $\|\bx^{\rho}\|^2\leq 1-1/n$, where the inequality is saturated for pure states only. It is worth noticing that, for $n>2$, the set of $\bx^{\rho}\in \mathbb{R}^{n^2-1}$ such that $\overline{\bx^{\rho}}\in \Phi(\mathcal{P})$ is strictly contained in the ball of radius $1-1/n$. 
%

For $n>2$, 
one can choose the matrices $\Lambda_j$, $j\geq 1$, as the standard generators of the Lie algebra $\mathfrak{su}(n)$, suitably normalized (see Appendix~\ref{sec: su(N)}). However,
in the bipartite system under study, a convenient choice of the orthonormal basis would respect the tensor product structure of the Hilbert space. For this reason,  we construct the basis $\{\Lambda_0,\ldots,\Lambda_{(2N)^2-1}\}$  as follows. We set:
\begin{gather}
\Lambda_0=\frac{1}{\sqrt{2N}}\id_{2N},\\
\Lambda_i=\frac{1}{\sqrt{2N}}\sigma_i\otimes \id_{N},\quad i=1,2,3, 
\\ 
\Lambda_{3+(N^2-1)(i-1)+j}=\frac{\sigma_i}{\sqrt{2}}\otimes S_j,\qquad  \begin{array}{c} i=1,2,3,\\ j=1,\ldots,N^2-1,\end{array}\\
\Lambda_{3N^2+j}=\frac{\id_{2}}{\sqrt{2}}\otimes S_j,\qquad j=1,\ldots,N^2-1.
\end{gather}
where the matrices $S_j$ (multiples of the standard generators of $\mathfrak{su}(N)$) are available in Appendix~\ref{sec: su(N)}. 

Thanks to this choice of the basis, the vector of coherences acquires a form reflecting the bipartite structure of the system: indeed, 
setting
\[
\boldsymbol{x^A}=\left(\begin{smallmatrix}
        \bx^{\rho}_1\\\bx^{\rho}_2\\\bx^{\rho}_3               
                      \end{smallmatrix}
\right)\quad 
\boldsymbol{x^{AS}}=\left(\begin{smallmatrix}
        \bx^{\rho}_4\\\vdots\\\bx^{\rho}_{3N^2}               
                      \end{smallmatrix}
\right)\quad 
\boldsymbol{x^S}=\left(\begin{smallmatrix}
        \bx^{\rho}_{3N^2+1}\\\vdots\\\bx^{\rho}_{4N^2-1}               
                      \end{smallmatrix}
\right),
\]
we have that  $\big(\frac{1}{\sqrt{2}},\sqrt{N}\boldsymbol{x^A}\big)$ is the Bloch vector associated with ${\rm tr}_S\brho$ and $\big(\frac{1}{\sqrt{N}},\sqrt{2}\boldsymbol{x^S}\big)$ is the coherence vector representation
of ${\rm tr}_A\brho$. In view of this, the partial trace operations act on $\Phi(\mathcal{P})$ as the composition of a product by a normalizing factor and a projection.

We end this section by showing the particular structure of the coherence vectors associated with  factorized states. Assume that  there exist a 2-dimensional density matrix $\rho_A$ and a $N$-dimensional density matrix $\rho_S$ such that $\brho=\rho_A\otimes\rho_S$.  Easy computations show that the coherence vector $\boldsymbol{x}^{\rho}$ associated with $\brho$ has the form
\begin{equation} \label{eq: X factor}
 \boldsymbol{x}^{\rho}=\big(\boldsymbol{x^A},\sqrt{2N}\boldsymbol{x^A}\otimes \boldsymbol{x^S},\boldsymbol{x^S}\big).
\end{equation}

\subsection{Coherence representation of the GKSL equation}

 Taking advantage of the basis $\{\Lambda_1,\ldots,\Lambda_{4N^2-1}\}$ above defined, we can write
\begin{gather}
h_A=\sum_{i=1}^3 \frac{\alpha_i}{2}\sigma_i,\qquad h_S=\sum_{j=1}^{N^2-1}\frulli_j S_j\\ 
H_I=\sum_{i=1}^3\sum_{j=1}^{N^2-1}\sqrt{2}h_{ij}\sigma_i\otimes S_j, \label{eq: HI}
\end{gather} 
for some real values $\alpha_i,\frulli_j,h_{ij}$.
Possibly changing the coordinates in the space of controls, we can assume that 
\[
H(\bu)-H_0=\sum_{i=1}^3 \frac{u_i}{2} \sigma_i\otimes 
\id_N.
\]
As, in this paper, the control are kept constant, we can rename the sums $\alpha_i+u_i$ as $\alpha_i$, $i=1,2,3$ and assume, in the following, that we can possibly choose the value of the frequencies $\alpha_i$ (actually, what we really will need is to possibly slightly change the value of $\alpha_i$, if it is zero or if it coincides with come particular function of other parameters).

In coherence  vector representation,  equation~\eqref{eq: GKSL} becomes 
\begin{equation} \label{eq: coherent controlled GKSL}
\dot{\overline{\boldsymbol{x^{\rho}}}}(t)=\big(\Hse{H} 
+\Hse{\mathcal{D}}\big)\overline{\boldsymbol{x^{\rho}}}(t),
\end{equation}
$\Hse{H}$  and $\Hse{\mathcal{D}}$ being respectively the representations of the operators $-i[H(\bu),\cdot]$
and $\mathcal{D}(\cdot)$, and they are real matrices.

 In particular, $M_H$ is an antisymmetric matrix having the block form
\begin{align}
 \Hse{H}&=\left(
\begin{smallmatrix}
 0 &  {\bf 0} &  {\bf 0} & {\bf 0}\\
 & & & \\
 {\bf 0} & \widehat{ {H}}_{A} & \widehat{ {H}}_{It} &  {\bf 0}\\
  & & & \\
  & & & \\
  {\bf 0} & \widehat{ {H}}_{Il}& \big(\widehat{ {H}}_{A}\otimes  {\id_{N^2-1}} +  {\id_{3}}\otimes \widehat{ {H}}_{S}+H_{Ic}\big) & \widehat{ {H}}_{Ir}\\
  & & & \\
  & & & \\
  {\bf 0} &  {\bf 0}& \widehat{ {H}}_{Ib} &  \widehat{ {H}}_{S} \\
  & & & 
\end{smallmatrix}
\right),
\end{align}
where $\widehat{ {H}}_{A}$ is square of dimension 3, $\widehat{ {H}}_{S}$ is square of dimension $N^2-1$, and $\widehat{ {H}}_{1c}$ is square of dimension $3(N^2-1)$. More precisely, $\widehat{ {H}}_{A}=\sum_{i=i}^3 \alpha_i T_i$,  
where the matrices $T_i$ are the adjoint representation of the Pauli matrices (up to a normalization) and are given by
\[
T_1=\left(\begin{smallmatrix}
      0 & 0 & 0\\
            0 & 0& -1\\
                  0 & 1 & 0
     \end{smallmatrix}\right)\quad
T_2=\left(\begin{smallmatrix}
      0 & 0 & 1\\
            0 & 0& 0\\
                  -1 & 0 & 0
     \end{smallmatrix}\right)\quad 
     T_3=\left(\begin{smallmatrix}
      0 & -1 & 0\\
            1 & 0& 0\\
                  0 & 0 & 0
     \end{smallmatrix}\right).
\]
The expressions of the other blocks can be written in terms of the\ families of matrices $C_j$ and $D_j$ defined in Appendix~\ref{sec: su(N)}: 
\begin{align}
\widehat{H}_S&=\sum_{j=1}^{N^2-1}\frulli_jC_j^T\\
H_{It} &= 2\sqrt{\frac{2}{N}}\sum_{l=1}^3 T_l\otimes \big(h_{l1},\ldots,h_{l N^2-1}\big)\\
H_{Ib} &= \sqrt{2}\sum_{j=1}^{N^2-1} \left(h_{1j}, h_{2j},h_{3j} \right) \otimes C^T_j\\
H_{Ic} &= \sqrt{2}\sum_{l=1}^3\sum_{j=1}^{N^2-1} h_{lj} T_l \otimes D^T_j.
\end{align}
By anti-symmetry, it holds $H_{Il}=-H_{It}^T$ and $H_{Ir}=-H_{Ib}^T$.

Concerning $\Hse{\mathcal{D}}$, by linearity we have that
$\Hse{\mathcal{D}}=\sum_{i=1}^3\Hse{\mathcal{D}_{\ell_i}}$, where each of the 
$\Hse{\mathcal{D}_{\ell_i}}$ 
has the form
\[
\Hse{\mathcal{D}_{\ell_i}}=\left(
\begin{smallmatrix}
 0 & {\bf 0} &   {\bf 0} &    {\bf 0} \\
 & & & \\
 \cappa_i & \Gamma & {\bf 0} &  {\bf 0}\\
  & & & \\
  & & & \\
  {\bf 0} & {\bf 0}& \Gamma_i\otimes  {\id_{N^2-1}} & \cappa_i\otimes \id_{N^2-1}\\
  & & & \\
  & & & \\
  {\bf 0} &  {\bf 0}& {\bf 0}&  {\bf 0} \\
  & & & 
\end{smallmatrix}
\right),
\]
with 
\begin{gather} \label{eq: gammai}
\Gamma_i=\left(
\begin{smallmatrix}
-2a_i^2-\frac{1}{2}b_i^2 & 0 & a_ib_i\cos(\chi_i)\\
0 & -2a_i^2-\frac{1}{2}b_i^2 & a_ib_i \sin(\chi_i)\\
a_ib_i \cos(\chi_i) & a_ib_i \sin(\chi_i) & -b_i^2
\end{smallmatrix}\right)\\
\kappa_i=\left(\begin{smallmatrix}
   -2a_ib_i \cos(\chi_i)\\
      -2a_ib_i \sin(\chi_i)\\
         b_i^2
 \end{smallmatrix}\right).\label{eq: kappai}
\end{gather}
In the following, we set $\Gamma_i=\sum_{i=1}^3 \Gamma_i$
and $\kappa_i=\sum_{i=1}^3 \kappa_i$.

Let us recall that, that, if all $\ell_i$'s are a normal matrices (which in the coordinates we chose corresponds to $b_i=0$ for every $i$), then the vector $\kappa$ is null  
(these are the so-called {\it unital} evolutions, see~\cite{altafiniJMP,dirr-helmke}). In systems governed by unital evolutions, the completely mixed state ($\brho=\id_{2N}/\sqrt{2N}$) is always an equilibrium of the system and the purity of the state is always non-increasing; remark moreover that such property depends on the dissipative part $\mathcal{D}$ only, and is completely independent on the Hamiltonian of the system.  Then, as it will be clear in the next Section, if the evolution is unital, the system cannot be globally stabilized towards the target set $\mathcal{W}_{\psi^*}$. The fact that at least one of the $\ell_i$ is not a normal operator is thus a necessary condition for stabilizability.


\section{Stabilizing any pure reduced state: necessary conditions} \label{sec: any dim}

\subsection{On the necessity of a factorized pure equilibrium}

Problem \probbo\ can be rigorously formulated as follows: given a unit-norm target state $|\psi_*\rangle \in \mathbb{C}^{N}$,
and setting $\mathcal{W}_{\psi_*}=\{\brho \in \mathcal{P}_{AS}: {\rm tr}_A\brho=|\psi_*\rangle\langle \psi_*|\}$
find a coupling such that, for every trajectory $\brho(t)$ solution of GKSL, we have 
\[
\lim_{t\to+\infty} {\rm dist}(\brho(t),\mathcal{W}_{\psi_*})=0,
\]
where dist denotes any distance on the set of density matrix.

A necessary condition for this to occur is that, for the chosen values of the couplings, there exists an invariant set of GKSL contained in $\mathcal{W}_{\psi_*}$.   In this section we will analyze the implication of such a requirement and come up with necessary conditions on the choice of the couplings.

To simplify the analysis, 
we apply on $\mathbb{C}^{2N}$ a coordinate change of the form $\id_2\otimes O$, with $O$ unitary of dimension $N$ such that  $O|\psi_*\rangle=(0,\ldots,0,1)^T$. In these coordinates, the target set becomes
\begin{equation} \label{eq: target set rho}
\mathcal{W}_{\psi_*}=\Bigg\{\brho \in \mathcal{P}: {\rm tr}_A \brho=\left(\begin{smallmatrix}
                            &  &  & 0\\
                            &   \text{\Large 0} && \vdots\\
                            & &   &0\\
                           0 &   \ldots   &0& 1
                          \end{smallmatrix}\right)
\Bigg\}.
\end{equation}
Let us remark that such coordinate change  does not affect the operators $h_A$ and $\ell$. 

It is a standard result (that can be verified by straight computation applying Schmidt decomposition~\cite{nielsen-chuang}) in quantum information theory that all elements in the target set \eqref{eq: target set rho} are {\it factorized}. More precisely, the following (more general) statement holds true.
\begin{lem} \label{lem: fact}
Let $\brho$ be a density matrix in the Hilbert space $\mathbb{C}^{n_A}\otimes \mathbb{C}^{n_B}$, with $n_A,n_B\geq 2$. Assume that tr$_A\brho$ is a pure state. Then there exists a density  matrix $\rho_A$ on $\mathbb{C}^{n_A}$ such that $\brho=\rho_A\otimes {\rm tr}_A\brho$.
\end{lem}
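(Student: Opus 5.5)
The plan is to reduce to the pure case by a convex decomposition, and then use the Schmidt decomposition to show that each pure component is already a product vector. Write ${\rm tr}_A\brho=|\phi\rangle\langle\phi|$ with $\phi\in\mathbb{C}^{n_B}$ a unit vector.

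First, take a spectral decomposition $\brho=\sum_k p_k|\Psi_k\rangle\langle\Psi_k|$, with $p_k>0$, $\sum_k p_k=1$ and unit vectors $\Psi_k\in\mathbb{C}^{n_A}\otimes\mathbb{C}^{n_B}$. By linearity of the partial trace,
\[
|\phi\rangle\langle\phi|=\sum_k p_k\,{\rm tr}_A|\Psi_k\rangle\langle\Psi_k|.
\]
Each ${\rm tr}_A|\Psi_k\rangle\langle\Psi_k|$ is a density matrix. A pure state is an extreme point of the convex set of density matrices, as recalled in Section~\ref{sec: coherence rep}. Hence ${\rm tr}_A|\Psi_k\rangle\langle\Psi_k|=|\phi\rangle\langle\phi|$ for every $k$.

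The key step is to show that a pure bipartite state whose reduced state is pure is a product state. Write the Schmidt decomposition $\Psi_k=\sum_r \lambda_r\, u_r\otimes v_r$, with $\lambda_r>0$, orthonormal $u_r\in\mathbb{C}^{n_A}$ and orthonormal $v_r\in\mathbb{C}^{n_B}$. Then ${\rm tr}_A|\Psi_k\rangle\langle\Psi_k|=\sum_r\lambda_r^2|v_r\rangle\langle v_r|$. This has rank equal to the Schmidt rank, so the Schmidt rank must be $1$. Therefore $\Psi_k=u_k\otimes v_k$, and $|v_k\rangle\langle v_k|=|\phi\rangle\langle\phi|$, i.e.\ $v_k=e^{i\theta_k}\phi$. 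Absorbing the phase into $u_k$, we get $|\Psi_k\rangle\langle\Psi_k|=|u_k\rangle\langle u_k|\otimes|\phi\rangle\langle\phi|$.

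Summing over $k$ then gives $\brho=\rho_A\otimes{\rm tr}_A\brho$ with $\rho_A=\sum_k p_k|u_k\rangle\langle u_k|$. This $\rho_A$ is Hermitian, positive semidefinite and of unit trace, so it is a density matrix on $\mathbb{C}^{n_A}$.

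The only delicate point is the extremality argument in the first step. If I prefer not to rely on it, I would argue directly: for $v\perp\phi$ we have $\langle v|{\rm tr}_A\rho_k|v\rangle=0$, so $\rho_k$ annihilates $\mathbb{C}^{n_A}\otimes\phi^{\perp}$. Since $\rho_k$ is positive semidefinite, it is then supported on $\mathbb{C}^{n_A}\otimes\phi$, which directly yields the product form.
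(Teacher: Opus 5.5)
Your proof is correct and follows the route the paper indicates: the paper gives no written proof, only the remark that the lemma can be verified by a direct computation with the Schmidt decomposition, and your argument carries out exactly that computation. The reduction of a mixed state to its pure spectral components, which the paper leaves implicit, is handled correctly, whether through extremality of pure states or through your alternative positivity argument (which in fact applies directly to the full density matrix without any decomposition).
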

The Lemma holds true if we reverse the roles of the subsystems A and B.

As a consequence of Lemma~\ref{lem: fact}, 
the set $\Phi(\mathcal{W}_{\psi_*})$ reads 
\begin{equation} \label{eq: Vmeno}
V_{\psi_*}=\left\{\Big(\frac{1}{\sqrt{2N}},\bx^A,\sqrt{2N}\bx^A\otimes \bx^B_*,\bx^B_*\Big):
\begin{array}{c}
\bx^A\in \mathbb{R}^3,\\ \|\bx^A\|\leq 1/2\end{array}  \right\},
\end{equation} 
where 
\[
\bx^B_*=\Big(0,\ldots,0,-\sqrt{\frac{N-1}{2N}}\Big)\in  \mathbb{R}^{N^2-1}.
\]
For conciseness' sake, in the following we set $\bb_*=-\sqrt{\frac{N-1}{2N}}$.

Before carrying on the analysis, we make the following assumption: 
\begin{itemize}
 \item[(H2)] The set 
\[
\{\frulli_j,h_{1j},h_{2j},h_{3j}: 1\leq j < N^2-1, j\neq r^2-1, 1\leq r\leq N \}
\]
contains at least one non-zero element.

\end{itemize}

On the one hand, the assumption is legitimate, as, in our viewpoint, we can suitably choose the values of the couplings $h_{ij}$ (the $\omega_j$ are fixed, though). On the other hand, we remark that (H2) is  necessary  for stabilizability. 
Assume indeed, for instance, that $\frulli_j=0$ and $(h_{1j},h_{2j},h_{3j})=(0,0,0)$ for every $j\neq r^2-1$, for every $1\leq r\leq N$. If so, direct computations show that the last line of the matrix $M_H+M_D$ is zero, that is, if $\overline{\bx^{\rho}}(t)$ is any solution of \eqref{eq: coherent controlled GKSL}, its last component is constant in time. This prevents stabilization to the target set for states for which the last component of $\overline{\bx^{\rho}}(t)$ is not identically $\bb_*$. 

The main result of this section states that the only invariant set contained in $V_{\psi_*}$, if it exists, is an equilibrium of the system and, moreover, it is a factorized pure state.
\begin{theorem}  \label{thm: purr}
Assume that (D)-(H1)-(H2) hold true.
Let $|\psi_* \rangle$ be a unit-norm vector in $\mathbb{C}^{N}$, and assume that there exist a nontrivial interval $I$ and a curve $\brho: I \to \mathcal{P}$, solution of \eqref{eq: GKSL}, such that 
that $\brho(t)\in \mathcal{W}_{\psi_*}$ for every $t\in I$.
Then there exists a unit-norm vector $|\phi\rangle\in \mathbb{C}^2$ such that 
\[
\brho(t)=|\phi\rangle \langle \phi|\otimes |\psi_*\rangle \langle \psi_*| \qquad \forall t\in I,
\]
i.e., $\brho$ is constant and a pure factorized state of the composite system.
\end{theorem}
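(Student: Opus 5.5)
By Lemma~\ref{lem: fact}, for every $t\in I$ we can write $\brho(t)=\rho_A(t)\otimes|\psi_*\rangle\langle\psi_*|$. In coherence coordinates this means $\overline{\bx^{\rho}}(t)\in V_{\psi_*}$, so the whole trajectory is described by the single curve $\bx^A(t)\in\mathbb{R}^3$. The $\bx^{AS}$ block is then $\sqrt{2N}\,\bx^A(t)\otimes\bx^B_*$ and the $\bx^S$ block is the constant $\bx^B_*$.

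The plan is to substitute this ansatz into \eqref{eq: coherent controlled GKSL} and read off the constraints block by block. The $\bx^S$ block has derivative identically zero. Its equation is $0=\widehat H_{Ib}\bx^{AS}+\widehat H_S\bx^B_*$, which is affine in $\bx^A(t)$. The $\bx^{AS}$ block requires $\sqrt{2N}\,\dot\bx^A\otimes\bx^B_*$ to equal the right-hand side. That right-hand side consists of
\begin{itemize}
\item[(i)] $\widehat H_{Il}\bx^A$;
\item[(ii)] $(\widehat H_A\otimes\id+\id\otimes\widehat H_S+H_{Ic}+\Gamma\otimes\id)\sqrt{2N}\,\bx^A\otimes\bx^B_*$;
\item[(iii)] $\widehat H_{Ir}\bx^B_*+(\kappa\otimes\id)\bx^B_*$.
\end{itemize}
Since $\bx^B_*$ is supported only on the last basis element $S_{N^2-1}$ (a diagonal generator), the components of the $\bx^{AS}$ equation along $\sigma_i\otimes S_j$ with $j\neq N^2-1$ must vanish identically. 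Together with the $\bx^S$ block, these give linear-affine constraints of the form $\mathcal{M}\bx^A(t)+c=0$ for all $t\in I$. Using the structure constants in $C_j,D_j$, I would show that these constraints involve exactly the coefficients $\omega_j,h_{1j},h_{2j},h_{3j}$ with $j$ off-diagonal. The off-diagonal generators are precisely the indices $j\neq r^2-1$ in (H2), because $[S_{N^2-1},S_j]$ is nonzero only for those $j$.

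The key claim is that the constraints force $\rho_A(t)$ to be pure. I would use the fact that $\mathcal{W}_{\psi_*}$ is a face of the convex set $\mathcal{P}$ (the states supported on $\mathbb{C}^2\otimes\psi_*$) together with invariance. Concretely, consider the identity $\frac{d}{dt}\langle\psi_*|{\rm tr}_A\brho|\psi_*\rangle=0$ and its second derivative, which must also vanish. Compute the second derivative via GKSL with $\brho=\rho_A\otimes P_*$, where $P_*=|\psi_*\rangle\langle\psi_*|$. Since the dissipator acts only on A, the relevant term is $-\,{\rm tr}\big((\id\otimes P_*^\perp)[H_I,[H_I,\rho_A\otimes P_*]]\big)$. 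Writing $H_I=\sum_k\sigma_k\otimes B_k$, this equals $2\sum_{k,l}{\rm tr}(\sigma_k\rho_A\sigma_l)\,\langle\psi_*|B_lP_*^\perp B_k|\psi_*\rangle$, plus a term from $h_S$. Vanishing of the leakage rate means that the vector $\sum_k(\sigma_k\otimes P_*^\perp B_k+\id\otimes P_*^\perp h_S)(\phi\otimes\psi_*)$ is zero for every $\phi$ in the range of $\rho_A(t)$. Hypothesis (H2) says that the operator $X=\sum_k\sigma_k\otimes P_*^\perp B_k\psi_*+\id\otimes P_*^\perp h_S\psi_*$, viewed as a map $\mathbb{C}^2\to\mathbb{C}^2\otimes\psi_*^\perp$, is nonzero. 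I expect the resulting kernel to be at most one-dimensional, which would force $\rho_A$ to have rank one. This last point is the main obstacle: a nonzero $2\times 2$-block map could still have trivial kernel, which is fine, or full kernel, which is excluded by (H2). So the rank-one conclusion holds whenever the kernel is nontrivial, and the kernel must be nontrivial since $\rho_A\neq0$.

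Once $\rho_A(t)=|\phi(t)\rangle\langle\phi(t)|$ with $\phi(t)$ spanning the one-dimensional kernel of the fixed map $X$, the vector $\phi$ is determined up to phase and independent of $t$. Hence $\brho$ is constant and equals $|\phi\rangle\langle\phi|\otimes|\psi_*\rangle\langle\psi_*|$. I would finally check, using the $\bx^A$ block, that this is consistent with invariance, which only imposes further conditions on the parameters and does not affect the conclusion.
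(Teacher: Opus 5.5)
Your core argument is correct and takes a genuinely different route from the paper. The paper stays in coherence coordinates. It substitutes $(\bx^A(t),\sqrt{2N}\,\bx^A(t)\otimes\bx^B_*,\bx^B_*)$ into the real bilinear system and uses the structure constants $f_{lj\gamma_N}$, $d_{ll\gamma_N}$ to extract the affine and cross-product constraints \eqref{eq: ortogolli}--\eqref{eq: vincoletti} on $\bx^A(t)$. A case distinction on whether the relevant $\omega_j$ vanish then shows $\|\bx^A(t)\|\equiv 1/\sqrt{2N}$ and that $\bx^A$ is constant.

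You work at the operator level instead. Set $P=\id\otimes P_*$ and $Q=\id\otimes P_*^\perp$. Staying in the face $\mathcal{W}_{\psi_*}$ forces $QH\brho(t)=0$. Your second-derivative computation is right: $\mathcal{D}^*(Q)=0$ and $P[Q,H]P=0$ give $\frac{d^2}{dt^2}{\rm tr}(Q\brho)=2\,{\rm tr}(QH\brho HQ)$. Tangency gives the same thing more directly, since $0=Q\dot\brho P=Q\mathcal{L}(\brho)P=-iQH\brho$. Hence ${\rm range}\,\rho_A(t)\subseteq\ker X$ for a fixed linear map $X$ on $\mathbb{C}^2$, and $X\neq0$ yields rank one and time independence at once.

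What this buys you: the argument is shorter, basis-free, and avoids the structure-constant bookkeeping. It also shows the conclusion is independent of $h_A$, of the controls (even time-dependent ones) and of the dissipator. In exchange, the paper's computation produces explicit constraints on the couplings that are reused later to construct $H_I$; they are precisely the real form of $X\rho_A(t)=0$.

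The step you must correct is the identification of (H2) with $X\neq0$, together with the justification that $[S_{N^2-1},S_j]\neq0$ for every off-diagonal $j$. Since $S_{\gamma_N}=\frac{1}{\sqrt{N(N-1)}}\big(\id_N-N|N\rangle\langle N|\big)$, it commutes with $S_{\alpha_{nm}},S_{\beta_{nm}}$ whenever $n<N$. Consequently $X=\sum_k\sigma_k\otimes P_*^\perp B_k\psi_*+\id\otimes P_*^\perp h_S\psi_*$ depends only on the coefficients $\omega_j,h_{kj}$ with $j\in\{\alpha_{Nm},\beta_{Nm}:m<N\}$, i.e. the generators coupling $\psi_*$ to $\psi_*^\perp$.

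For $N\ge3$, (H2) read literally allows $X=0$, for example with $N=3$, $h_S=\omega_1S_1$, and $H_I$ only along diagonal generators. Then $\mathbb{C}^2\otimes\psi_*$ is invariant under $H$ and all $L_k$. Every solution $\rho_A(t)$ of the induced qubit GKSL equation gives a solution $\rho_A(t)\otimes|\psi_*\rangle\langle\psi_*|$ in $\mathcal{W}_{\psi_*}$ that is in general neither constant nor pure.

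So the statement holds only with (H2) restricted to $j=\alpha_{Nm},\beta_{Nm}$. This is also what the paper's own argument needs, since its constraints \eqref{eq: vincoletti} are nontrivial only for $n=N$. State that reading explicitly; for $N=2$ it coincides with (H2). With it, your proof is complete.
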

\begin{proof}
Let $\overline{\boldsymbol{x^{\rho}}}(t)$ be the coherence vector representation of $\brho(t)$. Thanks to Lemma~\ref{lem: fact},  there exist two curves $\bx^A$, $\boldsymbol{w}^A : I \to \mathbb{R}^3$ such that
\begin{gather}
\overline{\boldsymbol{x^{\rho}}}(t)=(1/\sqrt{2N},\bx^A(t),\sqrt{2N}\bx^A(t)\otimes \bx^B_*,\bx^B_*) \label{eq: traj}\\
\dot{\overline{\boldsymbol{x^{\rho}}}}(t)=(0,\boldsymbol{w}^A(t),\sqrt{2N}\boldsymbol{w}^A(t)\otimes \bx^B_*,0,\ldots,0) \label{eq: traj vel}
\end{gather}
for every $t\in I$. 

In order to compute the r.h.s. of  \eqref{eq: coherent controlled GKSL}, we take into account the expressions for $C_j$ and $D_j$ ( \eqref{eq: fijk} and \eqref{eq: dijk} in Appendix~\ref{sec: su(N)}, respectively), and we obtain the following equalities:
\begin{align}
(\widehat{H}_B\bx^B_*)_k&=
\begin{cases}
-\sqrt{\frac{N}{N-1}}\frulli_{k-1}\bb_* & \mbox{ if } \begin{cases}k=\beta_{N^2-1\, m}\\ m<N\end{cases} \\
\sqrt{\frac{N}{N-1}}\frulli_{k}\bb_* & \mbox{ if } \begin{cases}k=\alpha_{N^2-1\, m}\\ m<N\end{cases}\\
0 & \mbox{ otherwise}.
\end{cases}\\
(C_j^T\bx^B_*)_k&= \label{eq: cbstar}
\begin{cases}
\sqrt{\frac{N}{N-1}}\bb_* & \mbox{ if } \begin{cases}k=\beta_{N^2-1 m}\\
m<N,j=k-1\end{cases} \\
-\sqrt{\frac{N}{N-1}}\bb_* & \mbox{ if } \begin{cases}k=\alpha_{N^2-1 m}\\m<N,j=k+1\end{cases}\\
0 & \mbox{ otherwise}.
\end{cases}\\
(D_j^T\bx^B_*)_k&=
\begin{cases}
\frac{2-N}{\sqrt{N(N-1)}}\bb_* & \mbox{ if } \begin{cases}j=k=\alpha_{nm} ,\beta_{nm}\\
m<n
\end{cases}\\
\frac{2}{\sqrt{N(N-1)}}\bb_* & \mbox{ if } \begin{cases}j=k=\gamma_{m} \\
m<N
\end{cases}\\
\frac{2(2-N)}{\sqrt{N(N-1)}}\bb_* & \mbox{ if } j=k=N^2-1\\
0 & \mbox{ otherwise}
\end{cases}
\end{align}
(for the definition of the coefficients $\alpha_{nm},\beta_{nm}$ and $\gamma_{n}$ we refer to Appendix~\ref{sec: su(N)}).

Substituting \eqref{eq: traj}  into \eqref{eq: coherent controlled GKSL}, we obtain
\begin{equation} \label{eq: vbdot}
\dot{\bx}^B(t)=\sum_{j=1}^{N^2-1} \big(2\sqrt{N}(h_{1j},h_{2j},h_{3j})\cdot \bx^A(t)+\frulli_j\big)C_j^T\bx^B_*. 
\end{equation}
Imposing \eqref{eq: traj vel} and taking into account 
\eqref{eq: cbstar}, we obtain the following family of constraints
\begin{equation} \label{eq: ortogolli}
2\sqrt{N}(h_{1j},h_{2j},h_{3j})\cdot \bx^A(t)+\frulli_j=0,
\end{equation}
holding true for every $ 1\leq j <N^2-1$ such that $j\neq n^2-1, n\leq N$.

On the other hand, the equation for $\dot{\bx}^{AS}$ reads
\begin{align}
\dot{\bx}^{AS}(t)&=
2\sqrt{\frac{2}{N}} \sum_{l=1}^3 T_l\bx^A(t)\otimes h_{l\blacktriangle}\\
&+\sqrt{2N} (\Gamma+\widehat{H}_A)\bx^A(t)\otimes \bx^B_*+\sqrt{2N} \bx^A(t)\otimes\widehat{H}_B \bx^B_*\\
&+ 2\sqrt{N}\sum_{l=1}^3\sum_{j=1}^{N^2-1} h_{lj} T_l\bx^A(t)\otimes D_j^T\bx^B_*\\
&+\sqrt{2}\sum_{j=1}^{N^2-1}\left(\begin{smallmatrix}
                                     h_{1j}\\ h_{2j}\\ h_{3j}
                                    \end{smallmatrix}
\right)\otimes C_j^T\bx^B_*+\kappa \otimes \bx^B_* , \label{eq: vabdotnn}
\end{align}
where 
we set $h_{l\blacktriangle}=(h_{l1},\ldots,h_{l N^2-1})$, $l=1,2,3$.

Equation \eqref{eq: traj vel} imposes that $\dot{\bx}^{AS}_k(t)= 0$ whenever $k\neq (N^2-1)l$, $l=1,2,3$. 
Consider then an integer $1\leq k\leq 3(N^2-1)$ such that $k\notin\{ (N^2-1)l: l=1,2,3\}$, and call $a_k$ and $b_k$ the unique coefficients $1\leq a_k\leq 3$ and $1\leq b_k<N^2-1$ such that $k=(N^2-1)(a_k-1)+b_k$.
It can be verified by explicit (simple but tedious) computations that, if $b_k=r^2-1$ for some $1\leq r<N$, then the $k$-th  component of the right-hand side of \eqref{eq: vabdotnn} is zero.
Let us then consider $k$ such that $b_k\neq r^2-1$, and assume that $b_k=\alpha_{nm}$ for some $1\leq m<n\leq N$.
Then 
\begin{align}\left(
\begin{smallmatrix}\dot{\bx}^{AS}_k\\\dot{\bx}^{AS}_{k+1}\end{smallmatrix}\right)&= 
2\sqrt{\frac{2}{N}} \left(
\begin{smallmatrix}\big(h_{\bullet \alpha_{nm}}\wedge \bx^A(t)\big)_{a_k}\\\big(h_{\bullet \beta_{nm}}\wedge \bx^A(t)\big)_{a_k}\end{smallmatrix}\right) \\
&+\sqrt{2N}\sqrt{\frac{N}{N-1}}b_*\left(
\begin{smallmatrix}
\frulli_{\beta_{nm}}\\-\frulli_{\alpha_{nm}}
\end{smallmatrix}
\right)\bx^A_{a_k}(t)\\
&+ 2\sqrt{N} \frac{2-N}{\sqrt{N(N-1)}}b_*\left(
\begin{smallmatrix}\big(h_{\bullet \alpha_{nm}}\wedge \bx^A(t)\big)_{a_k}\\\big(h_{\bullet \beta_{nm}}\wedge \bx^A(t)\big)_{a_k}\end{smallmatrix}\right) \\
&+\sqrt{2}\sqrt{\frac{N}{N-1}}b_*\left(
\begin{smallmatrix}
h_{a_k \beta_{nm}}\\-h_{a_k\alpha_{nm}} 
\end{smallmatrix}
\right)\\
&=
\left(
\begin{smallmatrix}
 \sqrt{2N} \big(h_{\bullet \alpha_{nm}}\wedge \bx^A(t)\big)_{a_k}-h_{a_k\beta_{nm}} -\sqrt{N} \frulli_{\beta_{nm}}\bx^A_{a_k}(t) \\
  \sqrt{2N} \big(h_{\bullet \beta_{nm}}\wedge \bx^A(t)\big)_{a_k}+h_{a_k\alpha_{nm}} +\sqrt{N} \omega_{\alpha_{nm}}\bx^A_{a_k}(t) 
\end{smallmatrix}
\right)
\end{align}
where for every $1\leq j\leq N^2-1$ we set
\[h_{\bullet j}=(h_{1j},h_{2j},h_{3j}).\]

Together with \eqref{eq: ortogolli}, the equation above provides the following constraints on $\bx^A(t)$, holding true for every $1\leq m <n<N$:
\begin{equation} \label{eq: vincoletti}
\begin{cases}
h_{\bullet\alpha_{nm}} \cdot\bx^A(t)=-\frac{\frulli_{\alpha_{nm}}}{2\sqrt{N}}\\
 h_{\bullet\beta_{nm}}\cdot\bx^A(t)=-\frac{\frulli_{\beta_{nm}}}{2\sqrt{N}}\\
h_{\bullet\alpha_{nm}}  \wedge \bx^A(t)-\frac{\frulli_{\beta_{nm}}}{\sqrt{2}}\bx^A(t)=\frac{1}{\sqrt{2N}}h_{\bullet\beta_{nm}} \\
 h_{\bullet\beta_{nm}} \wedge\bx^A(t)+\frac{\omega_{\alpha_{nm}}}{\sqrt{2}}\bx^A(t)=-
 \frac{1}{\sqrt{2N}}h_{\bullet\alpha_{nm}}. 
\end{cases}
\end{equation}

It is left to show that every 3-dimensional curve $\bx^A(t)$ satisfying \eqref{eq: vincoletti} and such that $\|\bx^A(t)\|^2\leq 1/2N$ for every $t$ must be constant with $\|\bx^A(t)\|^2= 1/2N$. To do so, we must distinguish 2 cases.

 $\boldsymbol{\frulli_{\alpha_{nm}}=\frulli_{\beta_{nm}}=0}$ \textbf{\textit{for every}} $\boldsymbol{m,n}$.  By (H2), there exists at least one pair $(m,n)$ such that at least one between $h_{\bullet \alpha_{nm}}$ and $h_{\bullet \beta_{nm}}$ is not the zero vector. Without loss of generality, assume that $h_{\bullet \alpha_{nm}}\neq 0$. The fourth equation in \eqref{eq: vincoletti} implies that $h_{\bullet \beta_{nm}}\neq 0$ and, moreover, that the three vectors $(h_{\bullet \alpha_{nm}},\bx^A(t),h_{\bullet \beta_{nm}})$ constitute a right-handed orthogonal triple for every $t\in I$. 
 Taking the norm of the last two equations in~\eqref{eq: vincoletti}, we see that, if there exists $\bx^A(t)$ satisfying \eqref{eq: vincoletti}, then $\|\bx^A(t)\|\equiv\frac{1}{\sqrt{2N}}$ for every $t\in I$.

\textbf{\textit{There exists a pair $\boldsymbol{(n,m)}$ such that $\boldsymbol{\frulli_{\alpha_{nm}}\neq 0}$ or $\boldsymbol{\frulli_{\beta_{nm}}\neq 0}$}}.
Without loss of generality, assume that $\frulli_{\alpha_{nm}}\neq 0$, which immediately imposes that both $\bx_A(t)$, $t\in I$, and  $h_{\bullet \alpha_{nm}}$ are not the zero vector (by first equation in \eqref{eq: vincoletti}).
Taking the scalar product with $\bx^A(t)$ of each member of the fourth equation  and comparing with the first equation, we see that the system is satisfied only if $\|\bx^A(t)\|\equiv\frac{1}{\sqrt{2N}}$.
If $h_{\bullet \beta_{nm}}$ and $h_{\bullet \alpha_{nm}}$ are linearly dependent, then $\bx^A$ is parallel to $h_{\bullet \alpha_{nm}}$ for every $t$ (otherwise the third equation would not be satisfied), thus constant.
Else, the projection of $\bx^A(t)$ onto the linear span of $\{h_{\bullet \alpha_{nm}},h_{\bullet \beta_{nm}}\}$ is constant thus, by the constancy of the norm, the claim holds true.
\end{proof}

\medskip
Thanks to Theorem~\ref{thm: purr}, we can extract other necessary conditions for  global stabilization. Indeed,
 substituting \eqref{eq: traj}  into \eqref{eq: coherent controlled GKSL}, and taking into account the fact that $\bx^A(t)$ is constant, we obtain that $\bx^A(t)\equiv \bx^A_*$, where $\bx^A_*$ 
satisfies
\begin{equation}\label{eq: vadot}
\big(\Gamma+\sum_{l=1}^3\eta_lT_l \big)\bx^A_*+\frac{\kappa}{\sqrt{2N}}=0, 
\end{equation}
where we set
\begin{equation} \label{eq: eta}
\eta_l=\alpha_l-\frac{4}{\sqrt{2}}\sqrt{1-\frac{1}{N}}h_{lN^2-1}, \quad l=1,2,3.
\end{equation}
It is worth remarking that, if $\bx^A(t)\equiv\bx^A_*$ and the couplings $h_{ij}$ are chosen according to \eqref{eq: ortogolli}, then
$\dot{\bx}^{AS}(t)\equiv 0$, as it should be. This has already been proved for the components $k\neq a(N^2-1)$, $a=1,2,3$, in the proof of Theorem~\ref{thm: purr}.
Let us now consider $k=a(N^2-1)$, for some $a=1,2,3$.

By antisymmetry, $
(C_j^T\bx^B_*)_k=0$ for every $j$. On the other hand, $(D_j^T\bx^B_*)_k\neq0$ only if $j=N^2-1$ and, if so, we have  $(D_j^T\bx^B_*)_k=-\frac{2(2-N)}{\sqrt{2}N}$.
Then 
\begin{align}
\dot{\bx}_k^{AS}(t) &=2\sqrt{\frac{2}{N}}\sum_{l=1}^3 h_{l N^2-1} (T_l\bx^A_*)_a \nonumber \\
&+\sqrt{2N}\big(\Gamma\bx^A_* +\sum_{l=1}^3\alpha_l T_l\bx^A_*\big)_a \bb_* \nonumber \\
&-2\sqrt{N}\frac{2(2-N)}{\sqrt{2}N}  \sum_{l=1}^3  h_{l N^2-1} (T_l\bx^A_*)_a +  \kappa_a\bb_* \nonumber \\
&=  \sqrt{2N}b_* \Bigg( \Gamma\bx^A_*+ \sum_{l=1}^3\Big(\alpha_l-\frac{4}{\sqrt{2}}\sqrt{1-\frac{1}{N}}\Big)T_l\bx^A_*\Bigg)_k  \nonumber \\
&+\bb_*\kappa_k. \label{eq: VAB ultimi}
\end{align}

In particular, if $\bx^A_*$ is a solution of \eqref{eq: vadot}, then the right-hand side of \eqref{eq: VAB ultimi} is zero for every $a=1,2,3$.

\subsection{Existence of a factorized pure equilibrium}

The existence of the factorized pure equilibrium prescribed by Theorem~\ref{thm: purr} relies on the existence of a solution of equation~\eqref{eq: vadot} with norm $1/\sqrt{2N}$.
The following Proposition shows that, provided that at least one noise operator is not normal, it is always possible to choose the coupling in such a way that such a vector exists.

\begin{lem} \label{lem: va esiste}
Let $m\geq 1$,  and consider $\Gamma=\sum_{i=1}^m \Gamma_i$ and $\kappa=\sum_{i=1}^m \kappa_i$, where $\Gamma_i$ and $\kappa_i$ are as in  \eqref{eq: gammai} and \eqref{eq: kappai}, respectively. Assume that $b_i\neq 0$ for at least one $1\leq i\leq m$.
Then there exist $\boldsymbol{\eta}\in \mathbb{R}^3$ such that $M_{\boldsymbol{\eta}}=\Gamma+\sum_{i=1}^3\eta_iT_i$ is invertible and $\|M_{\boldsymbol{\eta}}^{-1}\kappa\|=1$.
\end{lem}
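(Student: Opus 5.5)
The idea is to avoid any asymptotic analysis in $\boldsymbol{\eta}$. I will show that $M_{\boldsymbol{\eta}}$ is invertible for \emph{every} $\boldsymbol{\eta}$, and then exhibit one explicit unit vector $\bx$ together with an $\boldsymbol{\eta}$ satisfying $M_{\boldsymbol{\eta}}\bx=\kappa$. The construction uses the fact that $\sum_i\eta_iT_i$ is the cross-product matrix: from the explicit form of $T_1,T_2,T_3$ one checks that $T_i\bx=e_i\wedge\bx$, so $\sum_i\eta_iT_i\bx=\boldsymbol{\eta}\wedge\bx$.

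\textbf{Step 1: $\Gamma$ is negative definite.} Each $\Gamma_i$ with $b_i=0$ equals $-2a_i^2\,\mathrm{diag}(1,1,0)$, which is negative semidefinite. Now take $\Gamma_i$ with $b_i\neq0$, and set $u=\cos\chi_i x_1+\sin\chi_i x_2$, so that $x_1^2+x_2^2\ge u^2$. Then
\[
\bx^T\Gamma_i\bx\le -(2a_i^2+\tfrac12 b_i^2)u^2-b_i^2x_3^2+2|a_ib_i||u||x_3| .
\]
The right-hand side is a negative definite form in $(|u|,|x_3|)$, because $a_i^2b_i^2<(2a_i^2+\tfrac12b_i^2)b_i^2$. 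Moreover, when $x_3=0$ the remaining term $-(2a_i^2+\tfrac12b_i^2)(x_1^2+x_2^2)$ is strictly negative for $(x_1,x_2)\neq0$. Hence $\Gamma_i$ is negative definite. Summing over $i$, and using that at least one $b_i\ne0$, gives $\Gamma<0$.

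\textbf{Step 2: invertibility for all $\boldsymbol{\eta}$.} Since $\sum_i\eta_iT_i$ is antisymmetric, $\bx^TM_{\boldsymbol{\eta}}\bx=\bx^T\Gamma\bx<0$ for every $\bx\neq0$. So $M_{\boldsymbol{\eta}}$ has trivial kernel for every $\boldsymbol{\eta}\in\mathbb{R}^3$.

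\textbf{Step 3: choose the target $\bx=-e_3$.} This is motivated by the physics: $e_1\in\mathbb{C}^2$ is a common eigenvector of all the triangular $\ell_j$, so the corresponding pure ancilla state should be compatible with the dissipator. We want $\Gamma\bx+\boldsymbol{\eta}\wedge\bx=\kappa$ with $\bx=-e_3$, that is,
\[
\boldsymbol{\eta}\wedge e_3=-(\Gamma e_3+\kappa).
\]
This is solvable in $\boldsymbol{\eta}$ exactly when $\Gamma e_3+\kappa\perp e_3$. From \eqref{eq: gammai}--\eqref{eq: kappai},
\[
\Gamma e_3+\kappa=\Big(-\sum_i a_ib_i\cos\chi_i,\;-\sum_i a_ib_i\sin\chi_i,\;-\sum_ib_i^2+\sum_ib_i^2\Big),
\]
whose third component vanishes. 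So an explicit $\boldsymbol{\eta}$ exists, for instance $\boldsymbol{\eta}=\big(\sum_ia_ib_i\sin\chi_i,\,-\sum_ia_ib_i\cos\chi_i,\,\eta_3\big)$ with $\eta_3$ arbitrary (I will double-check this sign convention). With this choice, $M_{\boldsymbol{\eta}}(-e_3)=\kappa$. By Step 2, $M_{\boldsymbol{\eta}}^{-1}\kappa=-e_3$, which has norm $1$.

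\textbf{Expected obstacle.} The only delicate point is Step 1, the definiteness of $\Gamma$. This is what makes invertibility automatic, and without it one would need a genericity argument in $\boldsymbol{\eta}$. Step 3 reduces to the identity $e_3^T\Gamma e_3+\kappa_3=0$, which reflects the fact that the purity of the state $|e_1\rangle\langle e_1|$ is preserved by each $\mathcal{D}_{\ell_j}$.
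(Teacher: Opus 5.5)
Your argument is correct and is essentially the paper's. The paper also aims for $M_{\boldsymbol{\eta}}^{-1}\kappa=-e_3$: it fixes $\eta_1,\eta_2$ so that $\Gamma e_3+\boldsymbol{\eta}\wedge e_3=-\kappa$ and leaves $\eta_3$ free.

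The only structural difference is how invertibility is obtained. The paper observes that, with this choice, the third row of $M_{\boldsymbol{\eta}}$ is $(0,0,-\sum_i b_i^2)$ and the upper-left block is $\left(\begin{smallmatrix}-c&-\eta_3\\ \eta_3&-c\end{smallmatrix}\right)$ with $c=\sum_i(2a_i^2+\tfrac12 b_i^2)$. Hence $\det M_{\boldsymbol{\eta}}=-\big(\sum_i b_i^2\big)(c^2+\eta_3^2)\neq 0$ by inspection. Your Steps 1--2 (negative definiteness of $\Gamma$) take more work, but they give the stronger fact that $M_{\boldsymbol{\eta}}$ is invertible for every $\boldsymbol{\eta}\in\mathbb{R}^3$. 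Your solvability criterion $\Gamma e_3+\kappa\perp e_3$ also explains why the paper's choice of $\eta_1,\eta_2$ works, rather than just verifying it.

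One correction: the explicit $\boldsymbol{\eta}$ you display has the wrong sign.
\begin{itemize}
\item Since $\boldsymbol{\eta}\wedge e_3=(\eta_2,-\eta_1,0)$ must equal $-(\Gamma e_3+\kappa)=\big(\sum_i a_ib_i\cos\chi_i,\ \sum_i a_ib_i\sin\chi_i,\ 0\big)$, you need $\eta_1=-\sum_i a_ib_i\sin\chi_i$ and $\eta_2=\sum_i a_ib_i\cos\chi_i$. This is exactly the paper's choice.
\item With your signs, the third column of $M_{\boldsymbol{\eta}}$ becomes $(0,0,-\sum_i b_i^2)$, so $M_{\boldsymbol{\eta}}(-e_3)=(0,0,\sum_i b_i^2)$. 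This differs from $\kappa$ unless $\sum_i a_ib_i\cos\chi_i=\sum_i a_ib_i\sin\chi_i=0$.
\end{itemize}
The existence claim in your Step 3 is unaffected, because it rests on the solvability criterion and not on the displayed formula.
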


\begin{proof}
Choosing
\begin{gather}
\eta_1=-\sum_{i=1}^3a_ib_i\sin(\chi_i) \label{eq: eta1}\\ 
\eta_2=\sum_{i=1}^3a_ib_i\cos(\chi_i)\label{eq: eta2},
\end{gather}
one has 
\[
M_{\boldsymbol{\eta}}=\left(\begin{smallmatrix}
 -\sum_i(2a_i^2+\frac{1}{2}b_i^2) & -\eta_3 & 2\sum_i a_ib_i\cos(\chi_i)\\
 \eta_3 &-\sum_i(2a_i^2+\frac{1}{2}b_i^2) & 2\sum_i a_ib_i\sin(\chi_i)\\
0 & 0 & -\sum_i b_i^2
 \end{smallmatrix}\right),\]
 which is invertible for every value of $\eta_3$.  The thesis comes from the fact that
$M_{\boldsymbol{\eta}} \left(\begin{smallmatrix}
                              0 \\ 0 \\ 1
                             \end{smallmatrix}
\right)=-\kappa$.
\end{proof}

As a consequence of Lemma~\ref{lem: va esiste}, we have the following.

 \begin{proposition} \label{prop: existence eq}
 Let $h_A$ and $h_S$ be a 2-dimensional and a $N$-dimensional traceless Hermitian matrices, respectively. 
 Let $\ell_k$, $k=1,2,3$, be three (possibly null) upper triangular 
 2-dimensional matrices, and assume that at least one of them is not normal.
 
Then, for every unit norm $|\psi_*\rangle \in \mathbb{C}^N$, there always exists a matrix $H_I$ of the form \eqref{eq: HI} such that $\left(\begin{smallmatrix}
    1 & 0 \\ 0 & 0                                                                                                                                                                                                                                                                               \end{smallmatrix}
\right)\otimes |\psi_*\rangle \langle\psi_*|$ is an equilibrium of the GKSL equation with Hamiltonian
\[
H=h_A\otimes \id_N + H_I +\id_2 \otimes h_S
\]
and noise operators $\ell_i\otimes \id_N$, $i=1,2,3$.
 \end{proposition}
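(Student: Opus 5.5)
The plan is to work entirely in the coherence vector representation, in the coordinates where $|\psi_*\rangle=(0,\ldots,0,1)^T$. I will show that the vector
\[
\overline{\bx}_*=\Big(\tfrac{1}{\sqrt{2N}},\bx^A_*,\sqrt{2N}\,\bx^A_*\otimes\bx^B_*,\bx^B_*\Big),\qquad \bx^A_*=\tfrac{1}{\sqrt{2N}}(0,0,1)^T,
\]
lies in the kernel of $M_H+M_{\mathcal D}$ for a suitable choice of the $h_{ij}$. A direct computation with $\Lambda_3=\sigma_3\otimes\id_N/\sqrt{2N}$ shows that this $\bx^A_*$ is the component of $\left(\begin{smallmatrix}1&0\\0&0\end{smallmatrix}\right)\otimes|\psi_*\rangle\langle\psi_*|$, and it has the pure-state norm $1/\sqrt{2N}$. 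The proof of Theorem~\ref{thm: purr} and the computation following it already isolate the equations that a constant curve of the form \eqref{eq: traj} must satisfy. So it suffices to verify three groups of equations at $\bx^A_*$: the $\bx^B$-equations \eqref{eq: ortogolli}, the off-diagonal $\bx^{AS}$-equations \eqref{eq: vincoletti}, and the equation \eqref{eq: vadot}, which also covers the components $k=a(N^2-1)$ of $\bx^{AS}$ through \eqref{eq: VAB ultimi}.

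\emph{The $\bx^A$-equation.} The first step is \eqref{eq: vadot}. Since some $\ell_k$ is not normal, some $b_k\neq0$. By the proof of Lemma~\ref{lem: va esiste}, choosing $\eta_1,\eta_2$ as in \eqref{eq: eta1}--\eqref{eq: eta2} (with $\eta_3$ arbitrary) gives $M_{\boldsymbol\eta}(0,0,1)^T=-\kappa$. Hence $M_{\boldsymbol\eta}\bx^A_*+\kappa/\sqrt{2N}=0$, which is exactly \eqref{eq: vadot}. By \eqref{eq: eta}, the $\alpha_l$ are fixed and $h_{l\,N^2-1}$ enters with a nonzero coefficient. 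I therefore solve for $h_{1\,N^2-1}$ and $h_{2\,N^2-1}$ so that $\eta_1,\eta_2$ take the prescribed values, and set $h_{3\,N^2-1}=0$.

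\emph{The $\bx^B$- and off-diagonal $\bx^{AS}$-equations.} With $\bx^A_*$ parallel to $e_3$, condition \eqref{eq: ortogolli} reduces to $h_{3j}=-\frulli_j/\sqrt{2}$ for every off-diagonal index $j$. For a pair $(\alpha_{nm},\beta_{nm})$, write $h_{\bullet\alpha}=(p,q,-\frulli_\alpha/\sqrt2)$ and $h_{\bullet\beta}=(r,s,-\frulli_\beta/\sqrt2)$. Using $h\wedge e_3=(h_2,-h_1,0)$, the third components of the two vector equations in \eqref{eq: vincoletti} hold identically. The first two components reduce to $r=q$ and $s=-p$, and both equations give the same conditions. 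I would take $p=q=r=s=0$, that is, $h_{1j}=h_{2j}=0$ for all off-diagonal $j$. The diagonal indices $j=r^2-1$ with $r<N$ impose no constraint, because the corresponding components were shown to vanish identically, so I set $h_{lj}=0$ there.

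The step needing most care is bookkeeping rather than an idea: I must confirm that these four groups of equations (the $\bx^A$-row, the $\bx^B$-rows, and both the off-diagonal and the $a(N^2-1)$ components of the $\bx^{AS}$-rows) exhaust the rows of $(M_H+M_{\mathcal D})\overline{\bx}_*=0$. I must also check that the two sets of choices do not interfere: the off-diagonal choices involve only $j\neq N^2-1$, while \eqref{eq: vadot} involves only $h_{l\,N^2-1}$. The $\bx^A$-row contains the term $H_{It}\bx^{AS}_*$, and this is where $h_{l\,N^2-1}$ enters $\eta_l$. I would re-derive that row once explicitly, which is the computation leading to \eqref{eq: vadot} and \eqref{eq: eta}, rather than rely only on the remark after \eqref{eq: VAB ultimi}. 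Finally, assumption (H2) is not needed for existence, only for uniqueness, and it can be enforced afterwards by making the free parameters $p,q$ nonzero with $r=q$, $s=-p$, which leaves the equilibrium unchanged.
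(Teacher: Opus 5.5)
Your proposal is correct and follows the paper's route. The paper deduces the proposition from Lemma~\ref{lem: va esiste}: the choice \eqref{eq: eta1}--\eqref{eq: eta2} gives $M_{\boldsymbol{\eta}}(0,0,1)^T=-\kappa$, which is realized by solving \eqref{eq: eta} for $h_{1\,N^2-1},h_{2\,N^2-1}$. The remaining couplings are taken to satisfy \eqref{eq: ortogolli}--\eqref{eq: vincoletti}, and your $h_{3j}=-\frulli_j/\sqrt{2}$, $h_{1j}=h_{2j}=0$ is an explicit solution.

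If you want to avoid the row-by-row bookkeeping you flag, use the following check. The state $|\Psi\rangle=\left(\begin{smallmatrix}1\\0\end{smallmatrix}\right)\otimes\psi_*$ is stationary if and only if it is a common eigenvector of the $L_k$ and of $-iH-\frac{1}{2}\sum_k L_k^{\dagger}L_k$. The first condition is automatic because the $\ell_k$ are upper triangular. The second reduces to \eqref{eq: eta1}--\eqref{eq: eta2} together with $h_{3j}=-\frulli_j/\sqrt{2}$, $h_{1\alpha_{Nm}}=-h_{2\beta_{Nm}}$ and $h_{2\alpha_{Nm}}=h_{1\beta_{Nm}}$ for $j\in\{\alpha_{Nm},\beta_{Nm}\}$, $m<N$. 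Your choice satisfies all of these.
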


\subsection{The importance of being unique}

The existence of a pure factorized equilibrium of GKSL equation contained in $\mathcal{W}_{\psi_*}$ is a necessary condition for the stabilizability, but it is not sufficient. Indeed, on the one hand, several equilibria of the equation may exist (recall that, by linearity, the set of equilibria of the GKSL equation is convex, so that the existence of two equilibria automatically implies the existence of non-pure equilibria). 
On the other hand, the equilibrium may not be asymptotically stable (that is, only stable, as the GKSL equation does not admit unstable equilibria).

The two issues turn out to be only one, thanks to \cite[Theorem~1]{SchirmerWang10}, which states that an equilibrium of the GKSL equation is (globally) asymptotically
stable if and only if it is its unique equilibrium.
 Then, under the hypotheses of Proposition~\ref{prop: existence eq}, and using the couplings identified in previous sections, the system is globally asymptotically stable only if the target point is the unique equilibrium of the system.
 
Due to the high dimension, explicit computations of the rank of the matrix $M_H+M_{\mathcal{D}}$ are unpractical, even for $N=2$ and assigning numerical values to some parameters. 
Another approach is relying on the criteria developed  in the papers  \cite{Baumgartner2008a,Baumgartner2008b} (see also the conditions for subspace invariance in \cite{TiVi08,TiVi09}).
 Consider then any GKSL equation of the form \eqref{eq: GKSL}. 
 Following \cite{Baumgartner2008b}, we give the following definitions:            
\begin{definition}  
Let $\proj$ be an orthogonal projector on $\mathbb{C}^{2N}$. 
We say that
\begin{itemize}
 \item $\proj\mathbb{C}^{2N}$ is a {\bf lazy} subspace if and only if
\begin{equation} \label{eq: pigrizia}
L_k \proj=\proj L_k\proj \ \forall k.
\end{equation}
 \item $\proj\mathbb{C}^{2N}$ is a {\bf collecting} subspace if and only if
it is lazy and moreover
 \begin{equation} \label{eq: collecting}
\proj\big(iH -\frac{1}{2}\sum_{k} L_k^{\dagger}L_k\big)(\id_{2N} -\proj)=0.
\end{equation}
 \end{itemize} 
\end{definition}

\begin{definition}
 We define {\bf enclosure} a subspace $\mathcal{K}\subseteq \mathbb{C}^{2N}$ with
the property that, for all solutions $\brho(t)$ of GKSL equation,  the quantity {\rm tr}$\big(\proj_{\mathcal{K}} \brho(t)\big)$ (which is
the expectation value of the orthogonal projector onto this subspace)
 is constant in time.
\end{definition}
Lemma~7 in \cite{Baumgartner2008b} provides a criterion for 
$\mathcal{K}$ to be an enclosure: this occurs
if and only if $\proj_{\mathcal{K}}$ commutes both with the Hamiltonian $H$ and all noise operators $L_k$.

%

Collecting spaces and enclosures are tightly related with the equilibria of \eqref{eq: GKSL}, as the following results show.
\begin{lem} [\cite{Baumgartner2008b}] \label{lem: eq=> coll or encl}  
If $\brho$ is a stationary state of \eqref{eq: GKSL} and $\proj$ is the projector onto the range of $\brho$, then $\proj \mathbb{C}^{2N}$ is a
collecting subspace or an enclosure. 
\end{lem}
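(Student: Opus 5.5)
The plan is to show that the range of a stationary state is invariant under the dynamics in a strong sense. Let $\brho$ be stationary, $\proj$ the projector onto its range, and $\mathcal{Q}=\id_{2N}-\proj$. Write the generator as $\dot\brho = K\brho+\brho K^\dagger+\sum_k L_k\brho L_k^\dagger$, where $K=-iH-\frac12\sum_k L_k^\dagger L_k$. I would exploit positivity by sandwiching the stationarity equation $0=K\brho+\brho K^\dagger+\sum_k L_k\brho L_k^\dagger$ with $\mathcal{Q}$. Since $\mathcal{Q}\brho=\brho\mathcal{Q}=0$, this gives $\sum_k \mathcal{Q}L_k\brho L_k^\dagger\mathcal{Q}=0$. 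Each term is positive semidefinite, so each vanishes. Hence $\mathcal{Q}L_k\brho^{1/2}=0$, and because $\brho^{1/2}$ has the same range as $\proj$, we get $\mathcal{Q}L_k\proj=0$, i.e. $L_k\proj=\proj L_k\proj$. So $\proj\mathbb{C}^{2N}$ is lazy.

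Next I would sandwich the equation with $\proj$ on the left and $\mathcal{Q}$ on the right. The term $\brho K^\dagger\mathcal{Q}$ survives, $K\brho\mathcal{Q}=0$, and $\proj L_k\brho L_k^\dagger\mathcal{Q}=\proj L_k\brho(\mathcal{Q}L_k\proj)^\dagger$ vanishes by laziness. This yields $\brho K^\dagger\mathcal{Q}=0$, hence $\proj K^\dagger\mathcal{Q}=0$ (using invertibility of $\brho$ on its range). Taking adjoints gives $\mathcal{Q}K\proj=0$, which is the statement that the range is invariant under $K$. Spelled out with $H$ Hermitian, this reads $\mathcal{Q}(-iH-\frac12\sum L_k^\dagger L_k)\proj=0$.

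The collecting condition \eqref{eq: collecting}, however, concerns $\proj K'\mathcal{Q}$ with $K'=iH-\frac12\sum L_k^\dagger L_k$, which is exactly $-(\mathcal{Q}K\proj)^\dagger$. Precisely,
\[
(\mathcal{Q}K\proj)^\dagger=\proj\big(iH-\tfrac12\textstyle\sum L_k^\dagger L_k\big)\mathcal{Q}.
\]
So, perhaps surprisingly, the stationary-state computation gives collecting directly, not just a dichotomy. The statement says ``collecting or enclosure''; one form that would make the dichotomy genuine is the variant where $\mathcal{Q}$ need not be lazy, and an enclosure arises when additionally $\mathcal{Q}L_k\proj$ and $\proj L_k\mathcal{Q}$ both vanish. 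That extra case is a refinement, and the disjunction is implied either way.

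The main obstacle is the identity $\proj K^\dagger \mathcal{Q}=0$ from $\brho K^\dagger\mathcal{Q}=0$. I would handle it via $\brho^{+}\brho=\proj$, the Moore–Penrose inverse restricted to the range. The remaining care is bookkeeping of signs and adjoints, in order to match the paper's exact form of \eqref{eq: collecting}.
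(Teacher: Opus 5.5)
Your proof is correct and self-contained. The paper gives no argument of its own for this lemma; it only cites Baumgartner--Narnhofer.

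\textbf{Comparison with the usual argument.} The standard proof of this fact works at the level of the semigroup. Any $\sigma$ supported in the range of $\brho$ satisfies $\sigma\le c\,\brho$, so positivity and stationarity give $e^{t\mathcal{L}}\sigma\le c\,\brho$, and supports stay in $\proj\mathbb{C}^{2N}$. That route then needs a separate lemma translating invariance of the support into conditions on $H$ and the $L_k$. You instead compress $0=K\brho+\brho K^{\dagger}+\sum_k L_k\brho L_k^{\dagger}$ at the generator level: first by $\mathcal{Q}$ on both sides, then by $\proj$ on the left and $\mathcal{Q}$ on the right. This delivers exactly \eqref{eq: pigrizia} and \eqref{eq: collecting}, in the paper's form, in one go, and that form is what the paper actually uses later.

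\textbf{Two minor points.}

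First, $\proj K^{\dagger}\mathcal{Q}$ equals $(\mathcal{Q}K\proj)^{\dagger}$, not its negative. This is harmless, since both vanish.

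Second, your finding that the range is always collecting needs no reinterpretation of the statement. With the paper's definitions, every enclosure is already collecting. If $\proj$ commutes with $H$ and all $L_k$ (hence also with all $L_k^{\dagger}$), then $L_k\proj=\proj L_k\proj$ and $\proj\big(iH-\frac12\sum_k L_k^{\dagger}L_k\big)(\id_{2N}-\proj)=0$ hold trivially. So the ``or'' in the lemma is redundant, and you can drop the speculative ``variant'' in your third paragraph.
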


\begin{theorem}[\cite{Baumgartner2008b}] \label{thm: Baumgartner}
Let $\mathcal{K}$ be a subspace which is
a minimal enclosure or a minimal collecting subspace, containing no smaller enclosure or
collecting subspace. Then there exists one and only one stationary state supported by $\mathcal{K}$. Its
density matrix has maximal rank, ${\rm rank}(\brho) =\dim \mathcal{K}$.
\end{theorem}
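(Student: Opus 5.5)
The plan is to reduce everything to a finite-dimensional Markov semigroup living on $\mathcal{K}$. Existence then follows from a fixed-point argument. Full rank and uniqueness both follow from minimality combined with Lemma~\ref{lem: eq=> coll or encl}.

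\emph{Step 1: invariance.} Let $\proj=\proj_{\mathcal{K}}$. I will show that the face $\{\brho\in\mathcal{P}: \proj\brho\proj=\brho\}$ of states supported in $\mathcal{K}$ is invariant under the flow of \eqref{eq: GKSL}.

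For an enclosure this is immediate. By the criterion of \cite[Lemma~7]{Baumgartner2008b}, $\proj$ commutes with $H$ and with every $L_k$, so the generator commutes with $\brho\mapsto\proj\brho\proj$.

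For a collecting subspace, write the generator as
\[
\mathcal{L}(\brho)=G\brho+\brho G^{\dagger}+\sum_k L_k\brho L_k^{\dagger},\qquad G=-iH-\tfrac12\sum_k L_k^{\dagger}L_k .
\]
Taking the adjoint of \eqref{eq: collecting} gives $(\id-\proj)G\proj=0$, and laziness \eqref{eq: pigrizia} gives $(\id-\proj)L_k\proj=0$. Now let $\brho=\proj\brho\proj$. Then both $(\id-\proj)\mathcal{L}(\brho)(\id-\proj)$ and $(\id-\proj)\mathcal{L}(\brho)\proj$ vanish term by term. Hence $\mathcal{L}$ maps the operator space $\proj\,her(2N)\,\proj$ into itself, and so does the semigroup $e^{t\mathcal{L}}$. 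Its restriction is a trace-preserving, positive semigroup on the compact convex set of density matrices supported in $\mathcal{K}$.

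\emph{Step 2: existence.} I will take Cesàro averages $\frac1T\int_0^T e^{t\mathcal{L}}\brho_0\,dt$ of any state $\brho_0$ supported in $\mathcal{K}$. By compactness these have a limit point, which is a stationary state supported in $\mathcal{K}$. Alternatively, Brouwer's theorem can be applied to $e^{t\mathcal{L}}$ for dyadic times $t$, intersecting the resulting fixed-point sets.

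\emph{Step 3: full rank.} Let $\sigma$ be any stationary state supported in $\mathcal{K}$, and let $Q\le\proj$ be the projector onto its range. By Lemma~\ref{lem: eq=> coll or encl}, $Q\mathbb{C}^{2N}$ is a collecting subspace or an enclosure contained in $\mathcal{K}$. Minimality forces $Q=\proj$, hence ${\rm rank}\,\sigma=\dim\mathcal{K}$.

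\emph{Step 4: uniqueness.} Suppose $\sigma_1\neq\sigma_2$ are both stationary and supported in $\mathcal{K}$; by Step~3 both are invertible on $\mathcal{K}$. Set
\[
t_*=\sup\{t\ge0:\sigma_1-t\sigma_2\ge0\}.
\]
This is positive and finite because both states are invertible on $\mathcal{K}$. Also $t_*<1$: indeed $\sigma_1-\sigma_2\ge0$ together with equal traces would force $\sigma_1=\sigma_2$.

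By linearity, $\tau=(\sigma_1-t_*\sigma_2)/(1-t_*)$ is then a stationary state supported in $\mathcal{K}$. By maximality of $t_*$, $\tau$ has a nontrivial kernel in $\mathcal{K}$, so ${\rm rank}\,\tau<\dim\mathcal{K}$. This contradicts Step~3.

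The main obstacle is Step~1 for collecting subspaces: one has to get the signs and adjoints in \eqref{eq: collecting} right so that exactly the off-diagonal block $(\id-\proj)G\proj$ vanishes. The remaining steps are soft convexity arguments.
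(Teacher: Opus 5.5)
The paper gives no proof of this statement; it imports it from \cite{Baumgartner2008b}. Your proposal is correct and follows essentially the standard route of that reference: invariance of the face of states supported in $\mathcal{K}$, existence by Ces\`aro averaging, full rank because the support of any stationary state is itself collecting, and uniqueness via the extremal combination $\sigma_1-t_*\sigma_2$.

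Your sign bookkeeping is right: with the paper's convention, the collecting condition is exactly $(\id-\proj)G\proj=0$. In the enclosure case you can skip the commutation criterion altogether. The definition gives $\mathrm{tr}\big((\id-\proj)\brho(t)\big)\equiv 0$ for initial states supported in $\mathcal{K}$, and positivity then keeps the support of $\brho(t)$ in $\mathcal{K}$.
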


 Assume that we are given a GKLS equation satisfying hypotheses (D)-(H1)-(H2) and the hypotheses of Proposition~\ref{prop: existence eq}. Assume moreover that the matrix elements of $H_I$ satisfy equations~\eqref{eq: vincoletti} and \eqref{eq: eta}, where $\eta_1$ and $\eta_2$ are chosen according to  
\eqref{eq: eta1}-\eqref{eq: eta2}. Previous results guarantee that $\left(\begin{smallmatrix}
    1 & 0 \\ 0 & 0                                                                                                                                                                                                                                                                               \end{smallmatrix}
\right)\otimes |\psi_*\rangle \langle\psi_*|$ is an equilibrium of the system. To see if such equilibrium is asymptotically stable, we must verify that span of the vector
\[
\left(\begin{smallmatrix}1\\0
\end{smallmatrix}
\right)\otimes \psi
\]
 is a minimal enclosure or a minimal collecting subspace.

\section{Stabilization of a qubit} \label{sec: qubit}

This section focuses on the stabilization of a qubit ($N=2$). While Section~\ref{sec: any dim} established necessary conditions for arbitrary $N$, completing them into necessary and sufficient conditions requires a thorough identification of all equilibria of the system. For generic $N$, this task is algebraically intractable due to quadratic growth of the dimension of the coherence representation. For $N=2$, however, low dimension allows for an explicit analysis of the equilibria, enabling us to derive exact structural constraints that guarantee global asymptotic stabilization.

First, let us remark that $N^2-1=3$ and that the three matrices $S_j$ coincide with the Pauli matrices (up to a normalization). In particular,
\[
S_{\alpha_{21}}=S_1=\frac{\sigma_1}{\sqrt{2}},\quad
S_{\beta_{21}}=S_2=\frac{\sigma_2}{\sqrt{2}},\quad
S_{\gamma_{2}}=S_3=\frac{\sigma_3}{\sqrt{2}}.
\]
Then, for every $j$, $D_j$ is the null matrix, while $f_{ijk}=\sqrt{2}\epsilon_{ijk}$, $\epsilon_{ijk}$ denoting the totally anti-symmetric tensor (with non-zero entries equal to one).                          
To simplify notations and for coherence with those  adopted in \cite{C26}, in this section we set $\lambda_{ij}=2h_{ij}$, $i,j=1,2,3$ (in this way, the interaction Hamiltonian becomes $H_I=\sum_{i,j=1}^3 \frac{\lambda_{ij}}{2}\sigma_i\otimes \sigma_j$). 
We also set $\nu_j=\frulli_j\sqrt{2}$.
                           
With these notations, the constraints~\eqref{eq: vincoletti} become
\[
\begin{cases}
 (\lambda_{12},\lambda_{22},\lambda_{32})\cdot\bx^A_*=-\frac{\nu_2}{2}\\
 (\lambda_{11},\lambda_{21},\lambda_{31})\cdot\bx^A_*=-\frac{\nu_1}{2}\\
 (\lambda_{11},\lambda_{21},\lambda_{31})\wedge\bx^A_*-\nu_2\bx^A_*=
 \frac{1}{2}(\lambda_{12},\lambda_{22},\lambda_{32})\\
 (\lambda_{12},\lambda_{22},\lambda_{32})\wedge \bx^A_*+\nu_1\bx^A_*=-\frac{1}{2}(\lambda_{11},\lambda_{21},\lambda_{31}).
\end{cases}
\]
Also, the parameters $\eta_l$ defined in \eqref{eq: eta} assume the following form
\[
\eta_l=\alpha_l-\lambda_{l3}, \quad l=1,2,3.
\]
Together with \eqref{eq: eta1}-\eqref{eq: eta2}, this imposes the following choices for the couplings:
\begin{gather}
\lambda_{13}=\alpha_1+\sum_{i=1}^3 a_ib_i\sin(\chi_i) \label{eq: lam13}\\ 
\lambda_{23}=\alpha_2-\sum_{i=1}^3 a_ib_i\cos(\chi_i).\label{eq: lam23}
\end{gather}

Set now 
$\bx^A_*=-M_{\boldsymbol{\eta}}^{-1}\frac{\kappa}{2}=(0,0,1/2)$. We choose two unit vectors $\boldsymbol{\epsilon}_1,\boldsymbol{\epsilon}_3\in \mathbb{R}^3$ such that $(\boldsymbol{\epsilon}_1,2\bx^A_*,\boldsymbol{\epsilon}_3)$ is a right-handed orthonormal basis. Then, under (H2), the system here above is satisfied only if there exist two constants $p_1,p_2\in \mathbb{R}$, not both zero,
such that
\begin{equation} \label{eq: scelta lambda gen} 
\begin{cases}\lambda_{\bullet 1}=p_1\boldsymbol{\epsilon}_1-2\nu_1\bx^A_*+p_2\boldsymbol{\epsilon}_3\\ 
\lambda_{\bullet 2}=-p_2\boldsymbol{\epsilon}_1-2\nu_2\bx^A_*+p_1\boldsymbol{\epsilon}_3
\end{cases}
\end{equation}


Equations~\eqref{eq: lam13}-\eqref{eq: lam23}-\eqref{eq: scelta lambda gen} constitute additional
necessary condition for the asymptotic stabilization of the qubit to the target state $|\psi_*\rangle \langle \psi_*|$.
As remarked above, they are not sufficient, as one has to ensure that there exists a unit-norm $\phi\in\mathbb{C}^2$ such that $|\phi\rangle\langle\phi|\otimes|\psi_*\rangle \langle \psi_*|$ is the unique equilibrium of \eqref{eq: GKSL}. To investigate whether it is the case, additional information about the noise operators is necessary. First, we consider the case of a single noise operator, and we  distinguish the case of a generic (full rank) operator from the case of a rank-one degenerate operator, starting with the latter case.
Then, we briefly discuss the case of multiple noise operators.

\subsection{$L=\sigma_+ \otimes \id$ (Ladder raising operator)}  \label{sec: sigma+}                  
In the normal form \eqref{eq: jump triang}, we assume $a=0$ and, without loss of generality, $b=1$ and $\chi=0$. Such (2-dimensional) operator is denoted with $\sigma_+$ and   is known as {\it ladder raising operator} for qubits. 
The associated matrix $\Gamma$ and the vector $\cappa$ have the expressions
\[
\Gamma=\left(\begin{smallmatrix}
          -\frac{1}{2} & 0 & 0\\
          0 & -\frac{1}{2} & 0\\
          0 & 0 & -1
         \end{smallmatrix}\right)\quad \mbox{and} \quad
         \cappa=\left(\begin{smallmatrix}
                 0 \\ 0\\  1
                \end{smallmatrix}\right),\]
respectively. 
First, we notice that $\|M^{-1}_{\boldsymbol{\eta}}\cappa\|=1$ only if $\eta_1=\eta_2=0$, which imposes  following choices of the coupling constants
\begin{equation} \label{eq: lambda3 sigma}
\lambda_{13}=\alpha_1\quad
\lambda_{23}=\alpha_2,
\end{equation}
and provides
 $\bx_*^A=(0,0,\frac{1}{2}).$
In order to fulfill equations~\eqref{eq: scelta lambda gen},
we can choose any pair of the form
\[
\lambda_{\bullet 1}=(\sin(\zeta),\cos(\zeta),-\nu_1)\qquad 
\lambda_{\bullet 2}=(\cos(\zeta),-\sin(\zeta),-\nu_2),
\]
where $\zeta\in\mathbb{R}$.
                    
We now inspect under which conditions the state 
\[\brho_*= \left(\begin{smallmatrix}1 & 0 \\ 0 & 0\end{smallmatrix}\right)  \otimes  \left(\begin{smallmatrix}0 & 0 \\ 0 & 1\end{smallmatrix}\right)\] is the unique equilibrium of GKSL.
Accordingly to the criteria developed in  \cite{Baumgartner2008b} (and recalled in the previous section), we must find all collecting subspaces and enclosures.

First, we need to identify the lazy spaces associated with the dynamics. For the sake of readability, we provide the details of the computations in Appendix~\ref{app: lazy}.
The projectors whose range is a lazy subspaces are:
\begin{enumerate}
\renewcommand{\theenumi}{\alph{enumi}}
\item \label{laz psi psi} All 1-dimensional subspaces of $\mathbb{C}^4$ generated by non-trivial vectors of the form $\boldsymbol{\varphi}^{\dagger}=(\varphi_1^*,\varphi_2^*,0,0)$. The projectors onto these spaces are of the form $\proj
=|\boldsymbol{\varphi}\rangle \langle \boldsymbol{\varphi}|$
(with $\boldsymbol{\varphi}$ of unit norm). Among them, we remark the space $\bV_{\psi_*}=\{(0,z,0,0): z\in \mathbb{C}\}$, which supports the target state $\brho_*$.
\item The 2-dimensional subspace generated by the vectors
\[ \big(1,0,0,
0\big)\quad \mbox{ and }\quad  
\big(
0 , 1 , 0 ,0\big), \]
i.e. the image of the projector
$ \proj=
\left( \begin{smallmatrix}
\id_2 & 0 \\
0 & 0
 \end{smallmatrix}\right).$
 \item  All 2-dimensional subspaces generated by the vectors
\[ \Big(\cos(\theta),0,0,
e^{-i\phi}\sin(\theta)\Big)\quad \mbox{ and }\quad  
\big(
0 , 1 , 0 ,0\big), \]
i.e. the images of the projectors
  \begin{equation} 
 \proj=
\left( \begin{smallmatrix}
\cos(\theta)^2\quad & 0 & 0 & \cos(\theta)\sin(\theta)e^{-i\phi}\\
0 & 1 & 0 & 0\\
0 & 0 & 0& 0\\
\cos(\theta)\sin(\theta)e^{i\phi} & 0 & 0& \sin(\theta)^2
 \end{smallmatrix}\right),
\end{equation}
for $\theta\in(0,\pi/2)$ and $\phi\in[0,2\pi]$.
\item  All 2-dimensional subspaces generated by the vectors
\[ \Big(0,\cos(\theta),
e^{-i\phi}\sin(\theta),0\Big)\quad \mbox{ and }\quad  
\big(
1,0 , 0 ,0\big), \]
i.e. the images of the projectors
  \begin{equation} 
 \proj=
\left( \begin{smallmatrix}
1& 0 & 0&  0\\
0 & \cos(\theta)^2\quad   & \cos(\theta)\sin(\theta)e^{-i\phi}& 0\\
 0& 
\cos(\theta)\sin(\theta)e^{i\phi} & \sin(\theta)^2 & 0 \\
0 & 0 & 0 & 0
 \end{smallmatrix}\right),
\end{equation}
for $\theta\in(0,\pi/2)$ and $\phi\in[0,2\pi]$.
 \item  All 2-dimensional subspaces generated by the vectors
\[ \big(\cos(\theta),
e^{-i\phi}\sin(\theta),0,0\big)\ \mbox{and} \
\big(
0 , 0 , \cos(\theta),
e^{-i\phi}\sin(\theta)\big), \]
i.e. the images of the projectors
  \begin{equation} \label{eq: P id psi psi}
 \proj=\id_2\otimes
\left( \begin{smallmatrix}
\cos(\theta)^2\quad  & \cos(\theta)\sin(\theta)e^{-i\phi}\\
\cos(\theta)\sin(\theta)e^{i\phi} &  \sin(\theta)^2
 \end{smallmatrix}\right),
\end{equation}
for $\theta\in[0,\pi/2]$ and $\phi\in[0,2\pi]$.
\item All 3-dimensional subspaces 
generated by the vectors
\begin{gather} 
(1,0,0,0),\quad  
\big(
0 , 1 , 0 ,0\big)\quad
 \big(0,0,\cos(\theta),
e^{-i\phi}\sin(\theta)\big).
\end{gather}
i.e. the images of the projectors
  \begin{equation} \label{eq: P 3 dim}
 \proj=
\left( \begin{smallmatrix}
1 & 0 & 0 & 0\\
0 & 1 & 0 & 0\\
0 & 0 & \cos(\theta)^2\quad & \cos(\theta)\sin(\theta)e^{-i\phi} \\
0& 0 &\cos(\theta)\sin(\theta)e^{i\phi} &  \sin(\theta)^2 
 \end{smallmatrix}\right),
\end{equation}
for $\theta\in[0,\pi/2]$ and $\phi\in[0,2\pi]$.
\end{enumerate}


We now verify which ones of the lazy subspaces above identified are also collecting. In order to do so, for every orthogonal  projection $\mathbb{P}$ on the candidate lazy subspace, we compute the matrix
$K=\proj\big(iH -\frac{1}{2}\sigma_-\sigma_+\otimes\id_2\big)(\id_4-\proj)$, which must be 
 the null matrix by equation \eqref{eq: collecting}. For each of the lazy subspaces identified above, we obtain the following results.
\begin{enumerate}
\renewcommand{\theenumi}{\alph{enumi}}
\item 
Explicit computations show that, if we substitute equations   \eqref{eq: lambda3 sigma} into the expression of $K$, the latter can be zero only when $\boldsymbol{\varphi}=(0,e^{-i\phi},0,0)$, for some $\phi$.
In this case, $\proj\mathbb{C}^{4}=\bV_{\psi_*}$  is a collecting subspace.
\item As the element $K_{14}$ of is always equal to $e^{i\zeta}$, 
$K$ is never the null matrix. So $\proj\mathbb{C}^{4}$ is not a collecting subspace.  

\item \label{coll P psi 1} 
Substituting equations~\eqref{eq: lambda3 sigma} into the expression of $K$, we obtain
\begin{align}
K_{43}&=\sin(\theta)\big(e^{i\phi}\cos(\theta)(i\alpha_1+\alpha_{2})+\sin(\theta)(\nu_1-i\nu_{2})\big)\\
K_{44}&=\frac{1}{8} \sin(2\theta) \Big(4 \cos(\phi+\zeta) +    4 i\cos(2\theta) \sin(\phi+\zeta) \\
   &- 
   i \sin(2\theta) (-i + 
   2 \alpha_3 + 2 \nu_3)\Big).
\end{align}
All other matrix elements of $K$ are proportional to $K_{34}$ or to $K_{44}$.
If $\alpha_1^2+\alpha_2^2\neq 0$ and $\nu_1^2+\nu_2^2\neq 0$, there always exists a unique pair $(\tilde{\theta},\tilde{\phi})$ for which $K_{43}=0$. 
Plugging these values into $K_{44}$, we cannot in principle exclude that $K_{44}=0$. 

However, we can prevent this to happen by using the control $u_3$. Indeed, recall that we can always  detune $\alpha_i$ with an action of the control in such a way that, for $(\theta,\phi)=(\tilde{\theta},\tilde{\phi})$,
we have $K_{44}\neq 0$, so that we destroy the collecting property of the subspace.
\item As $K_{14}=e^{i\phi}$, $\mathbb{PC}^4$ is not a collective subspace.

\item Consider now  projectors of the form \eqref{eq: P id psi psi}.
For $\theta=0$, we have that $K_{14}=e^{i\phi}$ while, for 
$\theta=\pi/2$, we have that $K_{41}=e^{i\phi}$, that is, 
$\mathbb{PC}^4$ is not collecting.

The case in which $\sin(2\theta)\neq0$ requires a longer analysis. Indeed, applying \eqref{eq: lambda3 sigma}, we obtain 
\begin{align*}
K_{31}&= \cos\theta(\sin \theta)^2 ( \cos\theta(i\alpha_1-\alpha_2)-e^{-i(\phi+\zeta)}\sin\theta)\\
K_{14}&= (\cos \theta )^3 (\cos \theta  e^{-i \zeta }-e^{-i \phi }(\alpha_2+i \alpha_1) \sin \theta )
\end{align*}
$K_{31}=0$  only if $\sin \theta = i (\alpha_1+i \alpha_2) \cos \theta  e^{i (\zeta+\phi )}$. Substituting this equality into the expression for $K_{14}$, we obtain
$K_{14}=e^{i\zeta}(\cos\theta)^2(1+\alpha_1^2+\alpha_2^2)$, which is never zero. 
Again, we can conclude that $\mathbb{PC}^4$ is not a collecting subspace.

\item Also for this case, we must treat separately the cases corresponding to different values of $\theta$. 

If $\sin\theta=0$, it is easy to verify that $K_{14}=e^{i\zeta}$. 
If instead $\cos\theta=0$, then $K$ is the null matrix only if $\nu_1=\nu_2=\alpha_1=\alpha_2=0$. We recall that, while the values of $\nu_i$ are imposed by the system, the values of $\alpha_i$ can possibly be modified by applying a constant control. Then we can always assume that at least one between $\alpha_1$ and $\alpha_2$ is not zero.

Assume now that both $\sin(\theta)\neq0$ and $\cos(\theta)\neq 0$, and plug \eqref{eq: lambda3 sigma} into the expression of $K$. In particular, we obtain
\begin{align}
K_{34}&= e^{-i \phi } (\cos \theta )^2 \Big(i\cos(2\theta)\big(\nu_1\cos\phi +\nu_2\sin\phi\big)\\
&+\nu_2\cos\phi-\nu_1\sin\phi+i\cos\theta\sin\theta\big(\lambda_{33}-\nu_3\big)
\Big).
\end{align}
If $\nu_1=\nu_2=0$, then, choosing  $\lambda_{33}\neq \nu_3$, we have $K_{34}\neq 0$. On the other hand, if at least one between $\nu_1$ and $\nu_2$ is nonzero, than there exist two unique  $\hat{\phi}_1,\hat{\phi}_2\in[0,2\pi]$ (with $\hat{\phi}_2-\hat{\phi}_1=\pi$) such that $\nu_2\cos\hat{\phi}_i-\nu_1\sin\hat{\phi}_i=0$. Associated with them, there exist $\hat{\theta}_1$ and $\hat{\theta}_2$ such that $\cos(2\hat{\theta}_j)(\nu_1\cos\hat{\phi}_j+\nu_2\sin\hat{\phi}_j)+\cos\hat{\theta}_j\sin\hat{\theta}_j(\lambda_{33}-\nu_3)=0$, $j=1,2$. As
\[
K_{14}=\cos\hat{\theta}_i\big(e^{i\zeta}\cos\hat{\theta}_i-e^{-i\hat{\phi}_i}\sin\hat{\theta}_i(i\alpha_1+\alpha_2)\big),
\]
and appropriate choice of $\alpha_1$ and $\alpha_2$ makes $K_{14}\neq 0$. 

Thus, with appropriate choices of the couplings and of the detuning,  $\mathbb{PC}^4$ is not a collecting subspace.
\end{enumerate}
Summing up,
the choices guaranteeing that the target state $\brho_*$ is the unique equilibrium of the system are:
\begin{itemize}
 \item if $\nu_1=\nu_2=0$, choose $\lambda_{33}\neq \nu_3$ and possibly modify $(\alpha_1,\alpha_2)$, in such a way that they are not both zero.
 \item if $\nu_1^2+\nu_2^2\neq 0$, then choose any $\lambda_{33}$ and apply a constant control to modify the values of $\boldsymbol{\alpha}$. In particular, first apply $u_3$ as prescribed at point c); then apply $(u_1,u_2)$ as prescribed in point f).
\end{itemize}


We summarize the results in the following statement.

\begin{theorem} \label{thm: un solo eq per sigmam}
Consider the quantum system \eqref{eq: GKSL} with only one noise operator $L=\sigma_+\otimes \id$ and $H$ as in Section~\ref{sec: preliminaries}. 
Then, possibly detuning the parameters $\alpha_i$ by means of a constant control, it is always possible to choose the coupling  parameters $\lambda$ in such a way that
$\brho_*$
is a globally asymptotically stable equilibrium of the system.
\end{theorem}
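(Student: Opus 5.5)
The argument assembles the preceding analysis and rests on two earlier results. By \cite[Theorem~1]{SchirmerWang10}, an equilibrium of the GKSL equation is globally asymptotically stable if and only if it is the unique equilibrium. So it suffices to choose $\lambda$ and a constant detuning of $\boldsymbol{\alpha}$ so that $\brho_*$ is an equilibrium and no other stationary state exists. The proof has three steps.

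\emph{Step 1: $\brho_*$ is an equilibrium.} Since $b=1\neq 0$, Lemma~\ref{lem: va esiste} and Proposition~\ref{prop: existence eq} apply with $\eta_1=\eta_2=0$, which gives the choices \eqref{eq: lambda3 sigma} and $\bx^A_*=(0,0,1/2)$. We then fix $\lambda_{\bullet1},\lambda_{\bullet2}$ in the family parametrized by $\zeta$, which satisfies \eqref{eq: scelta lambda gen} and hence the constraints \eqref{eq: vincoletti}. Because $p_1,p_2$ are not both zero, (H2) also holds. By \eqref{eq: vadot} and the vanishing of \eqref{eq: VAB ultimi}, $\brho_*$ is stationary.

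\emph{Step 2: reduction to collecting subspaces and enclosures.} Suppose $\brho'$ is another stationary state. Its range projector $\proj$ is either collecting or an enclosure, by Lemma~\ref{lem: eq=> coll or encl}. Moreover, every such subspace contains a minimal one, which supports a stationary state by Theorem~\ref{thm: Baumgartner}. So uniqueness follows from two facts:
\begin{enumerate}[(i)]
\item there is no nontrivial proper enclosure;
\item the only collecting subspace is $\bV_{\psi_*}$.
\end{enumerate}
The whole space is excluded from the comparison because it is trivially an enclosure.

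For (i), an enclosure projector must commute with $L=\sigma_+\otimes\id$ and with $H$. The commutant of $\sigma_+\otimes\id$ consists of the operators $(c\id_2+d\sigma_+)\otimes X$. A projector of this form must have $d=0$, so it is $\id_2\otimes P$. Commutation with $H_I$ then forces $P$ to commute with $\sigma_1$ and $\sigma_2$, because the coefficients $\lambda_{\bullet1},\lambda_{\bullet2}$ are nonzero (their first two components form a unit vector). Hence $P\in\{0,\id\}$.

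For (ii), I use the classification of lazy subspaces (a)--(f) from Appendix~\ref{app: lazy}. For each class I check the collecting condition $K=0$, following the case analysis already carried out above:
\begin{itemize}
\item Classes (b) and (d), and the endpoint cases of (e) and (f), are excluded directly by a unimodular entry $K_{14}$ or $K_{41}$.
\item Class (e) with $\sin 2\theta\neq0$ is excluded because $K_{31}=0$ forces $K_{14}=e^{i\zeta}\cos^2\theta(1+\alpha_1^2+\alpha_2^2)\neq0$.
\item Class (a) leaves only $\bV_{\psi_*}$.
\end{itemize}

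\emph{Step 3: generic choice of the detuning.} The remaining classes, (c) and the interior and $\cos\theta=0$ cases of (f), are the main obstacle. In each of them, the conditions for $K=0$ cut out finitely many candidate pairs $(\theta,\phi)$: the unique pair $(\tilde\theta,\tilde\phi)$ in (c), and the pairs $(\hat\theta_j,\hat\phi_j)$ in (f). At these points, the residual entry ($K_{44}$ in (c), $K_{14}$ in (f)) is a nontrivial analytic function of $\alpha_3$ and of $(\alpha_1,\alpha_2)$, respectively.

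I split according to whether $\nu_1^2+\nu_2^2$ vanishes.
\begin{itemize}
\item If $\nu_1=\nu_2=0$, I take $\lambda_{33}\neq\nu_3$ and $(\alpha_1,\alpha_2)\neq0$. This kills (f) and the $\cos\theta=0$ case. Case (c) then has no solution of $K_{43}=0$ with $\sin\theta\neq0$, because $K_{43}$ reduces to $\sin\theta\cos\theta\,e^{i\phi}(i\alpha_1+\alpha_2)$.
\item Otherwise, I first fix $u_3$ so that $K_{44}(\tilde\theta,\tilde\phi)\neq0$. Then I fix $(u_1,u_2)$ so that $K_{14}(\hat\theta_j,\hat\phi_j)\neq0$ for $j=1,2$ and $(\alpha_1,\alpha_2)\neq0$.
\end{itemize}
The delicate point is the order of these choices. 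The candidate points in (c) themselves depend on $(\alpha_1,\alpha_2)$, so changing $(\alpha_1,\alpha_2)$ afterwards could undo the first step. I would resolve this by noting that each nonvanishing condition excludes only a proper analytic (measure-zero) subset of $\boldsymbol{\alpha}$-space. A generic small detuning of all three $\alpha_i$ simultaneously therefore satisfies every condition at once.
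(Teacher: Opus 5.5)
You follow the paper's own route: uniqueness $\Leftrightarrow$ global asymptotic stability via Schirmer--Wang, ranges of stationary states are collecting or enclosures, then $K=0$ checked on the lazy subspaces (a)--(f) of Appendix~\ref{app: lazy}, plus detuning. Your simultaneous generic choice of $\boldsymbol{\alpha}$ does repair the ordering issue in the paper's ``first $u_3$, then $(u_1,u_2)$''. However, your step (ii) inherits the weak point of that route: (a)--(f) is not the complete list of lazy subspaces. For $L=\sigma_+\otimes\id$, lazy simply means $L$-invariant. The two-dimensional $L$-invariant subspaces other than $\ker L=\{(x_1,x_2,0,0)\}$ are exactly $\mathbb{C}\{(y,w),(w,0)\}$ with $w\in\mathbb{C}^2\setminus\{0\}$, $y\in\mathbb{C}^2$ (first two coordinates: ancilla in its first basis state). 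This is a four-real-parameter family, whereas (c), (d), (e) only cover $w\parallel(0,1)$, $w\parallel(1,0)$, or $y\in\mathbb{C}w$. The slip is the appendix's parametrization of a nilpotent $B$, which in general is $\left(\begin{smallmatrix}a&b\\-a^2/b&-a\end{smallmatrix}\right)$ with arbitrary $a$, not only $a=\pm1$. For instance, $\mathbb{C}\{(1,-1,1,1),(1,1,0,0)\}$ is lazy and appears nowhere in the list. The omission matters. With your couplings one computes, in the paper's basis,
\[
H_{\rm eff}:=H-\tfrac{i}{2}L^{\dagger}L=\begin{pmatrix} c_{+} & 0 & \bar z & \bar q\\ 0 & c_{-} & 0 & 0\\ z & 0 & -\frac{\alpha_3}{2}+s_{-}-\frac{i}{2} & \nu_1-i\nu_2\\ q & 0 & \nu_1+i\nu_2 & -\frac{\alpha_3}{2}-s_{-}-\frac{i}{2}\end{pmatrix},
\]
with $z=\alpha_1+i\alpha_2$, $q=ie^{-i\zeta}$, $s_{\pm}=\frac{\nu_3\pm\lambda_{33}}{2}$, $c_{\pm}=\frac{\alpha_3}{2}\pm s_{+}$. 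As shown below, the only possible collecting subspace not containing $v_*=(0,1,0,0)$ is $\mathbb{C}\{g,H_{\rm eff}g\}$ with $g=(z,q,0,0)$. As soon as $z\neq0$ and $s_{+}\neq0$ (the generic situation under your choices; Step 3 even insists on $z\neq0$), this subspace lies in the missing family. So your case analysis never examines the one candidate that actually matters.

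A direct argument closes the gap and yields more. By the adjoint of \eqref{eq: collecting}, $V$ is collecting iff it is invariant under $L$ and $H_{\rm eff}$. Let $V\neq\{0\}$ be such, with $v_*\notin V$.
\begin{itemize}
\item Since $L|_V$ is nilpotent, $V\cap\ker L=\mathbb{C}g$ with $g_1\neq0$. Then $LH_{\rm eff}g=g_1(z,q,0,0)\in\mathbb{C}g$ forces $g\propto(z,q,0,0)$ with $z\neq0$ (if $z=0$ this already puts $v_*$ in $V$).
\item Since $LV\subseteq\mathbb{C}g$, we get $V=\mathbb{C}\{g,r\}$ with $r=(H_{\rm eff}g-c_{+}g)/z=(0,-2s_{+}q/z,z,q)$.
\item Invariance of $r$ forces $(z,q)$ to be an eigenvector of the lower-right block, with eigenvalue $\mu=\tau-\frac{\alpha_3}{2}-\frac{i}{2}$, $\tau\in\mathbb{R}$.
\item Then $H_{\rm eff}r-\mu r=\big(|z|^2+1,\,2s_{+}q(\mu-c_{-})/z,\,0,\,0\big)\in\mathbb{C}g$ gives $|z|^2+1=2s_{+}(\mu-c_{-})$. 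Its imaginary part forces $s_{+}=0$, and then the right-hand side vanishes: contradiction.
\end{itemize}
Hence every nonzero collecting subspace contains $v_*$. This, not your (ii), is the criterion you need; classes (b), (c), (f) contain $v_*$ and are harmless even if collecting. Indeed, if $\brho'\neq\brho_*$ were stationary, $\brho'-t_0\brho_*$ with $t_0$ maximal subject to positivity would be a nonzero positive stationary matrix whose range omits $v_*$, which is impossible. This also covers a full-rank $\brho'$, which your Step 2 leaves open: (i)--(ii) say nothing about a stationary state with range $\mathbb{C}^4$. It also shows that neither the detuning of Step 3 nor any condition on $\lambda_{33}$ is needed. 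Finally, your enclosure check is correct but redundant: a projector commuting with $H$ and $L$ also commutes with $L^{\dagger}$, hence is lazy and satisfies \eqref{eq: collecting}. Also, the commutant of $\sigma_+\otimes\id$ is $\{\id\otimes X_1+\sigma_+\otimes X_2\}$, not a set of product operators.
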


 \subsection{Generic noise operators} 
 \label{sec: generic channels}

Now we consider again one noise operator, and we assume that the parameters $a$ and $b$ in the normal form \eqref{eq: jump triang} are both nonzero.
We remark that assuming $b\neq 0$ is crucial: indeed, if not, the operator $\ell$ would be normal.

Under the assumptions made above, asymptotic stabilization towards the target set is always possible, as the following result states.

\begin{theorem} \label{thm: un solo eq per triang}
Let $\ell$ be as in \eqref{eq: jump triang} and assume that $ab \neq 0$. 
Then, possibly detuning the parameters $\alpha_i$ by means of a constant control, there exist a choice of the couplings $\lambda_{\bullet i}$, $i=1,2,3$, and a pure 2-dimensional density matrix $\rho_A$  on $\mathcal{H}_A$ such that 
$ 
    \rho_A\otimes \left(\begin{smallmatrix}
                0 &0 \\ 0 & 1
               \end{smallmatrix}\right)
$  is the only equilibrium of \eqref{eq: GKSL}.
\end{theorem}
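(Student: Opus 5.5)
We follow the same strategy as in Section~\ref{sec: sigma+}. First we fix the couplings dictated by the necessary conditions. Take $N=2$ and a single noise operator $\ell$ with $ab\neq0$. Lemma~\ref{lem: va esiste} gives the choice \eqref{eq: lam13}-\eqref{eq: lam23}, that is
\[
\lambda_{13}=\alpha_1+ab\sin\chi,\qquad \lambda_{23}=\alpha_2-ab\cos\chi .
\]
With it, $M_{\boldsymbol\eta}$ is upper block triangular, and $\bx^A_*=(0,0,1/2)$ solves \eqref{eq: vadot} with $\|\bx^A_*\|=1/2$, so $\rho_A=|e_1\rangle\langle e_1|$. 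We then choose $\lambda_{\bullet1},\lambda_{\bullet2}$ as in \eqref{eq: scelta lambda gen}, for instance $\lambda_{\bullet1}=(\sin\zeta,\cos\zeta,-\nu_1)$ and $\lambda_{\bullet2}=(\cos\zeta,-\sin\zeta,-\nu_2)$. Proposition~\ref{prop: existence eq} then makes $\brho_*=|e_1\rangle\langle e_1|\otimes|e_2\rangle\langle e_2|$ an equilibrium. The free parameters left are $\zeta$, $\lambda_{33}$ and the detunings $\alpha_1,\alpha_2,\alpha_3$.

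Uniqueness is then checked through Lemma~\ref{lem: eq=> coll or encl} and Theorem~\ref{thm: Baumgartner}. If there were another equilibrium, there would also be one whose support is a minimal enclosure or a minimal collecting subspace different from $\bV_{\psi_*}=\mathrm{span}\{e_1\otimes e_2\}$. So it suffices to show two things: there is no nontrivial enclosure, and $\bV_{\psi_*}$ is the only collecting subspace.

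Enclosures are easy. A proper enclosure projector must commute with $L=\ell\otimes\id_2$ and with $H$. Since $ab\neq0$, $\ell$ has distinct eigenvalues $\pm a$ and is not normal, so its commutant inside the projectors is only $\{0,\id_2\}$. Hence the projector has the form $\id_2\otimes Q$. Commuting with $H_I$, whose $\lambda_{\bullet1},\lambda_{\bullet2}$ are not both zero, then forces $Q\in\{0,\id_2\}$.

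Lazy subspaces come next. The condition $L\proj=\proj L\proj$ says that $\proj\mathbb{C}^4$ is $\ell\otimes\id$-invariant. Because $\ell$ is diagonalizable, with eigenvectors $e_1$ (eigenvalue $a$) and $f=(b e^{-i\chi},-2a)$ (eigenvalue $-a$), the invariant subspaces are exactly the sums $(e_1\otimes V_1)\oplus(f\otimes V_2)$ with $V_1,V_2\subseteq\mathbb{C}^2$. This gives a finite list of families parametrized by lines in $\mathbb{C}^2$. The list is shorter than in the $\sigma_+$ case: there are one-dimensional spaces $e_1\otimes\varphi$ and $f\otimes\varphi$, sums of two such lines, $e_1\otimes\mathbb{C}^2$, $f\otimes\mathbb{C}^2$, $\mathbb{C}^2\otimes\varphi$, and the three-dimensional ones. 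For each family we compute
\[
K=\proj\big(iH-\tfrac12 L^\dagger L\big)(\id_4-\proj)
\]
and look for entries that cannot vanish, as was done in cases a)--f) of Section~\ref{sec: sigma+}.

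The main obstacle is this case-by-case analysis of $K$, now carried out in the non-orthogonal frame $\{e_1,f\}$. The extra term $L^\dagger L$ is no longer diagonal, and it brings in cross terms proportional to $ab$. For each family we first try to locate an entry of $K$ that depends only on $\zeta$ and the fixed data, so that it is nonzero independently of the parameters, as $K_{14}=e^{i\zeta}$ was in Section~\ref{sec: sigma+}. For the families where only isolated parameter values $(\theta,\phi)$ make one entry vanish, we substitute those values into a second entry. We then use the genericity of $\lambda_{33}$ and of the detuning $\alpha_3$ (and of $\alpha_1,\alpha_2$, which enter through $\lambda_{13},\lambda_{23}$ only as shifts) to make that entry nonzero. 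This is possible because only finitely many exceptional values arise, and a small change of the $\alpha_i$ preserves the equilibrium conditions when the $\lambda_{i3}$ are updated accordingly. Once every lazy subspace except $\bV_{\psi_*}$ is shown not to be collecting, Theorem~\ref{thm: Baumgartner} yields uniqueness, and \cite[Theorem~1]{SchirmerWang10} upgrades it to global asymptotic stability.
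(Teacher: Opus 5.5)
Your setup is the same as the paper's. You use the same $\lambda_{13},\lambda_{23}$ from Lemma~\ref{lem: va esiste}, the same $\lambda_{\bullet1},\lambda_{\bullet2}$, and $\bx^A_*=(0,0,1/2)$, and you identify lazy subspaces with the $\ell\otimes\id$-invariant subspaces $(e_1\otimes V_1)\oplus(f\otimes V_2)$. Your separate enclosure step is correct in its conclusion but redundant. A projector commuting with $H$ and $L$ also commutes with $L^{\dagger}$, so it automatically satisfies the lazy and collecting conditions and is already covered by the collecting analysis. Also, ``$\lambda_{\bullet1},\lambda_{\bullet2}$ not both zero'' is not what excludes a rank-one $Q$. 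What you need is that the rows $(\lambda_{i1},\lambda_{i2},\lambda_{i3})$ are not all parallel to one direction. That holds here because $(\sin\zeta,\cos\zeta)$ and $(\cos\zeta,-\sin\zeta)$ are independent.

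\textbf{The gap.} The step that carries the whole content of the theorem is only announced, not proved. That step is to exhibit one choice of $(\zeta,\lambda_{33},\alpha_1,\alpha_2,\alpha_3)$ for which no lazy subspace other than $\bV_{\psi_*}$ is collecting. The mechanism you sketch does not work as stated. You assert that only isolated parameter values make one entry of $K$ vanish, so that finitely many exceptional values arise, but you give no computation for this. It is also not what happens for the mixed planes $(e_1\otimes\varphi_1)\oplus(f\otimes\varphi_2)$ with $\varphi_1,\varphi_2\in\mathbb{CP}^1$. That family has four real parameters, while the vanishing of a single entry of $K$ imposes only two real conditions. So the zero set of one entry is in general two-dimensional, not finite. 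Perturbing $(\lambda_{33},\alpha)$ member by member then does not yield one choice that works for the whole continuum.

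What you need is a uniform argument, or at least a dimension count showing that the union over the family of the bad parameter sets is nowhere dense. The paper supplies a uniform argument for this family. It covers the family by small neighbourhoods where $|K_{14}|$ is bounded below, plus a complementary region where $K_{11}=f+g\lambda_{33}$ with $f$ bounded and $|g|$ bounded below. A single large $\lambda_{33}$ then excludes all members at once. For the other families the paper gives explicit non-vanishing entries: $K_{14}=e^{-i\zeta}$ on $V_+$, $K_{44}\neq0$ on $V_-$, and $K_{14}=-e^{-i\zeta}$ on $W^-_\xi$. Where that is not possible it gives explicit open conditions, $(B+C\lambda_{33})/A\notin\mathbb{R}$ and $e^{-i\zeta}(i\alpha_1+\alpha_2-abe^{-i\chi})\notin\mathbb{R}$, which are compatible with a large $\lambda_{33}$.

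You rightly parametrize the families by complex lines, whereas the paper computes with real angles $\xi,\eta$. So you cannot simply import those entries; the extra phases must be handled as well. As it stands, your proposal shows that $\brho_*$ is an equilibrium, but not that it is the only one.
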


\begin{proof}
Thanks to Lemma~\ref{lem: va esiste},  choosing   $\lambda_{1 3}$ and $\lambda_{23}$ according with \eqref{eq: eta1}-\eqref{eq: eta2} 
we obtain that $\bx^A_*=(0,0,1/2)$ satisfies equation~\eqref{eq: vadot}, independently on the value of $\lambda_{33}$. Then, we choose
\begin{equation} \label{eq: lambda gen}
\lambda_{\bullet2}=(\cos\zeta,-\sin\zeta,-\nu_2) \qquad
\lambda_{\bullet1}=(\sin\zeta,\cos\zeta,-\nu_1),
\end{equation}
for some $\zeta\in[0,2\pi)$.
To find the lazy spaces associated with these dynamics, we remark that the operator
 $\ell\otimes \id$ is diagonalizable and has two degenerate eigenvalues: $a$ corresponding to the eigenspace $V_+=\{(x_1,x_2,0,0): x_1,x_2\in \mathbb{C}\}$, 
 %
 and $-a$, with eigenspace $V_-=\mathbb{C}\{v_3,v_4\}$, where $v_3=(be^{-i\chi},0,-2a,0)$ and $v_4=(0,be^{-i\chi},0,-2a)$.
The lazy spaces are indeed the invariant spaces of $\ell\otimes \id$. In particular, besides the trivial spaces, we have:
\begin{enumerate}
\renewcommand{\theenumi}{\alph{enumi}}
  \item Any  1-dimensional subspace of $V_+$.
    \item Any  1-dimensional subspace of $V_-$.
 \item The eigenspace $V_+$.
 \item The eigenspace $V_-$.
 \item All 2-dimensional space generated by two eigenvectors pertaining to different eigenvalues. More precisely, the family of eigenspaces 
 \[
 V_{\eta,\theta}=\mathbb{C}
 \left\{ 
 \left(
 \begin{smallmatrix}
  \cos(\eta)\\ \sin(\eta)\\0 \\0
 \end{smallmatrix}\right),
\left(
 \begin{smallmatrix}
  \cos(\xi)b e^{-i\chi}\\ \sin(\xi)b e^{-i\chi}\\-2a\cos(\xi) \\-2a\sin(\xi)
 \end{smallmatrix}
 \right)
 \right\},
 \]
with $\eta,\xi\in[0,2\pi)$, 
 \item The direct sum of each eigenspace with any eigenvector pertaining to a different eigenvalue. More precisely, the two families of eigenspaces
 \begin{gather}
 W^+_{\xi}=\mathbb{C}
 \left\{ 
 \left(
 \begin{smallmatrix}
  1\\ 0\\0 \\0
 \end{smallmatrix}\right),
  \left(
 \begin{smallmatrix}
  0\\ 1\\0 \\0
 \end{smallmatrix}\right),
\left(
 \begin{smallmatrix}
 0\\0\\ \cos(\xi) \\ \sin(\xi)
 \end{smallmatrix}
 \right)
 \right\},
 \\
 W^-_{\xi}=\mathbb{C}
 \left\{ 
  \left(
 \begin{smallmatrix}
  1\\ 0\\0 \\0
 \end{smallmatrix}\right),
  \left(
 \begin{smallmatrix}
  0\\ 1\\0 \\0
 \end{smallmatrix}\right),
\left(
 \begin{smallmatrix}
  \cos(\xi)b e^{-i\chi}\\ \sin(\xi)b e^{-i\chi}\\-2a\cos(\xi) \\-2a\sin(\xi)
 \end{smallmatrix}
 \right)
 \right\},
 \end{gather}
 for $\xi\in[0,\pi)$.
 \end{enumerate}

\smallskip

As in the previous case, to verify which lazy subspaces  are also collecting,  we compute the matrix
$K=\proj\big(iH -\frac{1}{2}\ell^{\dagger}\ell\otimes\id_2\big)(\id_4-\proj)$, for each projector on a lazy subspace. 

 \begin{enumerate}
\renewcommand{\theenumi}{\alph{enumi}}
\item Explicit computations show that the only lazy collecting 1-dimensional subspace contained in $V_+$ is $\mathcal{W}_{\psi_*}$.
\item Straight computation show that $\mathbb{C}v_4$ is not collecting. Consider now the projector over $\cos(\xi)v_3+\sin(\xi)v_4$, with $\xi\neq \pi/2,3\pi/2$. 
By computation, we can show that 
\begin{align*}
2aK_{11}+be^{i\chi} K_{13}&=\frac{b\cos\xi\big(
A\sin\xi+(B+C\lambda_{33})\cos\xi\big)}{(4 a^2+b^2)e^{i\chi}},
\end{align*}
with 
\begin{gather}
 A=4a^2e^{-i\zeta}+2abe^{i\chi}(\nu_2-i\nu_1),\\ B=4 (\alpha_2-i \alpha_1) a^2+(\alpha _2+i \alpha_1) b^2 e^{2 i \chi }\\
 C=2iabe^{i\chi}.
\end{gather}
Then, there may exists $\xi$ such that $K$ is the null matrix only if the ratio $\frac{B+C\lambda_{33}}{A}$ is real.   
If $C/A\notin \mathbb{R}$, it is always possible to 
choose $\lambda_{33}$ in such a way that $\frac{B+C\lambda_{33}}{A}\notin \mathbb{R}$. Else, possibly slightly modifying $\alpha_1$ and/or $\alpha_2$, we obtain that $B/A\notin \mathbb{R}$, so that no value of $\lambda_{33}$ can make the term real. 
Thus, it is always possible to choose $\lambda_{33}$ and eventually modify $\boldsymbol{\alpha}$ in such a way that no 2-dimensional subspace of $V_-$ is collecting.
\item When considering the projector over $V_+$, we obtain $K_{14}=e^{-i\zeta}$. 
\item When considering the projector over $V_-$, we obtain
\[
K_{44}=-\frac{2 a^2 b^2 \left(4 a^2+2 i (\alpha_3-\lambda_{33})+b^2\right)}{\left(4 a^2+b^2\right)^2},
\]
which is not zero whenever $\lambda_{33}\neq \alpha_3$.
\item For every choice of $\eta,\xi\in[0,2\pi)$,
$V_{\eta,\xi}$ is not a collecting subspace. 
To prove this, we first notice that, for $(\xi,\eta)\in \{(\pi/2+k\pi,q\pi)\}$, $K_{14}=\frac{b^2e^{i\zeta}}{4a^2+b^2}$. 
For every $r>0$, call $B^r_{k,q}$ the ball of radius $r$ centered at $(\pi/2+k\pi,q\pi)$. 
We can thus fix $\epsilon>0$ such that, for every $k,q$ and every $(\xi,\eta)\in B^{\epsilon}_{k,q}$,  we have $|K_{14}|\geq \frac{b^2}{2(4a^2+b^2)}$, i.e. the matrix $K$ is not null. 

Then, by computations we can see that $K_{11}=f+g\lambda_3$, where 
\begin{align}
|g(\xi,\eta)|&=\frac{16(\sin\eta)^2}{8a^2+b^2/a^2(1-\cos(2(\eta-\xi))}\\
&\times \big(4a^2(\cos\eta)^2+b^2(\cos\xi)^2(\sin(\eta-\xi))^2\big)
\end{align}
and $f$ is a bounded function of $\xi,\eta$, and depending also on the parameters $a,b,\chi,\alpha_i,\nu_i$.
In particular, 
$|g(\xi,\eta)|$
 is uniformly bounded from below when $\xi$ and $\eta$ do not belong to any of the $B^{\epsilon}_{k,q}$. Then, it is possible to choose $\lambda_{33}$ large enough to assure that, for such values of $(\xi,\eta)$, $K_{11}\neq 0$. 
 \item First consider $W_{\xi}^+$. As $K_{13}=\sin \xi \left(\sin \xi  \left(i \alpha_1+\alpha_2 -a b e^{-i \chi }\right)-e^{i \zeta } \cos \xi \right)$, choosing any pair $(\alpha_1,\alpha_2)$ such that 
 $e^{-i\zeta}\left(i \alpha_1+\alpha_2 -a b e^{-i \chi }\right)\notin \mathbb{R}$ assures that $K_{13}$ is never null, for $\xi\neq k\pi$. On the other hand,
 if $\xi=k\pi$, then $K_{14}=e^{i\zeta}$. Then $W_{\xi}^+$ is not collecting. 
As for the space $W_{\xi}^-$, it is never collecting, since $K_{14}=-e^{-i\zeta}$.
 \end{enumerate}
Summing up, if we choose $\lambda_{33}\neq \alpha_3$ large enough and we possibly modify $\alpha_1$ and $\alpha_2$, we obtain that the only lazy collecting subspace is $\mathcal{W}_{\psi_*}$.
Thus, that GKSL has only one equilibrium point.
 \end{proof}

\subsection{Lindbladians containing several noise operators} 
\label{sec: several jumps}
Consider now the case in which the dissipative part of the GKSL equation is described by several noise operators, all satisfying (D). In \cite{C26}, we proved that it is possible to make the target state globally asymptotically stable only if all the jump operator are simultaneously triangularizable. The following result proves also a sufficient condition for this property to occur.

\begin{proposition} \label{prop: un solo eq per molti triang}
Assume that the noise operators can be simultaneously triangularized and that $\cappa\neq 0$ and $\det \Gamma\neq0$.
Then, possibly modifying the parameters $\alpha_i$ by means of a constant control, there exist a choice of the couplings $\lambda_{\bullet i}$ and a pure 2-dimensional density matrix $\rho_A$  on $\mathcal{H}_A$ such that 
$ 
    \rho_A\otimes \left(\begin{smallmatrix}
                1 &0 \\ 0 & 0
               \end{smallmatrix}\right)
$  is the only equilibrium of \eqref{eq: GKSL}.
\end{proposition}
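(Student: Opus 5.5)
Since the noise operators can be simultaneously triangularized, I would first put all $\ell_i$ in the normal form \eqref{eq: jump triang} and follow the single-operator scheme of Theorem~\ref{thm: un solo eq per triang}. The aim is to produce a factorized pure equilibrium and then show, via Lemma~\ref{lem: eq=> coll or encl}, that the only collecting subspace or enclosure is the one-dimensional space supporting it. Theorem~\ref{thm: Baumgartner} then gives uniqueness, and \cite[Theorem~1]{SchirmerWang10} gives global asymptotic stability. (The statement writes the target as $\left(\begin{smallmatrix}1&0\\0&0\end{smallmatrix}\right)$; in the coordinates of Section~\ref{sec: any dim} this is simply a relabelling of $\psi_*$, so it changes nothing.)

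\textbf{Step 1 (existence).} Since $\kappa\neq0$, some $b_i\neq0$, and Lemma~\ref{lem: va esiste} provides $\boldsymbol\eta$ with $\eta_1,\eta_2$ as in \eqref{eq: eta1}--\eqref{eq: eta2}. This fixes $\lambda_{13},\lambda_{23}$ through \eqref{eq: lam13}--\eqref{eq: lam23} and gives $\bx^A_*=(0,0,1/2)$, so $\rho_A=|e_1\rangle\langle e_1|$ is pure. I then choose $\lambda_{\bullet1},\lambda_{\bullet2}$ as in \eqref{eq: lambda gen}, which satisfies \eqref{eq: scelta lambda gen} and (H2); Proposition~\ref{prop: existence eq} shows that $\brho_*$ is an equilibrium. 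This leaves $\lambda_{33}$, $\zeta$ and the detunings $\alpha_i$ free for the uniqueness step.

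\textbf{Step 2 (lazy subspaces).} A subspace is lazy iff it is invariant under every $\ell_i\otimes\id_2$. All $\ell_i$ are upper triangular, so $V_+=\{(x_1,x_2,0,0)\}=e_1\otimes\mathbb{C}^2$ and each of its subspaces is lazy. A lazy subspace not contained in $V_+$ must be invariant under the whole algebra generated by the $\ell_i$, and hence contained in the lattice of subspaces invariant under each $\ell_i$ separately. The hypothesis $\det\Gamma\neq0$ ensures that the commuting directions do not conspire into a large degenerate family; in particular the evolution is not unital and there is no extra conserved quantity. I would intersect the lists from Sections~\ref{sec: sigma+}--\ref{sec: generic channels} for each $\ell_i$. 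The union of those families (cases (a)--(f) of the two previous proofs) is a superset of the lazy spaces here, so it suffices to show that none of its members other than $\bV_{\psi_*}$ is collecting. Enclosures need $\proj$ commuting with $H$ and all $L_k$; since $\kappa\neq0$, a nontrivial enclosure would be of the form $\id_2\otimes P$ with $P$ commuting with the system part of $H$. This is ruled out by the coupling choice, because (H2) and $\lambda_{\bullet1},\lambda_{\bullet2}\neq0$ break the commutation.

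\textbf{Step 3 (collecting test).} For each candidate I compute $K=\proj\big(iH-\tfrac12\sum_i\ell_i^\dagger\ell_i\otimes\id\big)(\id-\proj)$. The dissipative term is now $\frac12\sum_i\ell_i^\dagger\ell_i$, a Hermitian $2\times2$ matrix whose entries are sums of $a_i^2$, $b_i^2$ and $a_ib_ie^{\pm i\chi_i}$. The computations have the same structure as before, with $a^2, b^2, ab e^{i\chi}$ replaced by these aggregate quantities. I expect the same entries to be decisive. $K_{14}$ picks up the factor $e^{i\zeta}$ from \eqref{eq: lambda gen}, which kills $V_+$, $W^-_\xi$ and the degenerate angles. $K_{44}$ (or its analogue) is nonzero as soon as $\lambda_{33}\neq\alpha_3$, where $\det\Gamma\neq0$ should guarantee that the real part coming from the dissipator does not vanish. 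For the remaining entries I would argue as before: pick $\lambda_{33}$ large, and detune $\alpha_1,\alpha_2$ to destroy the finitely many parameter values at which an entry could vanish.

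The main obstacle is Step 3 in the two-parameter families (the analogue of $V_{\eta,\xi}$). With several operators, the eigenvectors outside $V_+$ are no longer common to all $\ell_i$, and the compactness/large-$\lambda_{33}$ argument must be redone. I would first prove a uniform lower bound on the coefficient of $\lambda_{33}$ away from small neighbourhoods where $K_{14}$ is bounded away from zero, using $\det\Gamma\neq0$ to control the denominators.
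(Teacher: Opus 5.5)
Your Step~1, your enclosure argument and the overall Baumgartner--Narnhofer strategy coincide with the paper's. The gap is in Steps~2--3. You never determine which subspaces are lazy for \emph{all} $L_k$ simultaneously. You mention intersecting the single-operator lists, but then fall back on the union and propose to exclude its members by showing $K\neq0$ with the aggregated dissipator $\frac12\sum_i\ell_i^\dagger\ell_i$. The decisive part is left as ``I expect the same entries to be decisive'', and the $V_{\eta,\xi}$-type family is explicitly left open. The intersection is exactly the step the paper's proof rests on, and carrying it out removes your obstacle instead of creating one. There are two cases.
\begin{enumerate}
\item All $\ell_i$ are complex multiples $c_i\ell_1$ of one operator. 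Then $\sum_i\mathcal{D}_{c_i\ell_1}=\mathcal{D}_{\sqrt{\sum_i|c_i|^2}\,\ell_1}$, and Theorem~\ref{thm: un solo eq per sigmam} or~\ref{thm: un solo eq per triang} applies directly ($b_1=0$ is excluded by $\cappa\neq0$).
\item Two of them are linearly independent. Then they span $\mathrm{span}\{\sigma_3,\sigma_+\}$, so a jointly lazy subspace must be invariant under $\sigma_3\otimes\id$ and $\sigma_+\otimes\id$. That is, it has the form $e_1\otimes W_1\oplus e_2\otimes W_2$ with $W_2\subseteq W_1$, where $e_k$ and $f_k$ denote the canonical bases of the ancilla and of $\mathbf{S}$, so that $\bV_{\psi_*}=\mathbb{C}\,e_1\otimes f_2$. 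In this case no subspace of $V_-$ is lazy, because $\ell_2$ does not preserve the $(-a_1)$-eigenvector of $\ell_1$. Moreover, $V_{\eta,\xi}$ is lazy only when it equals $\mathbb{C}^2\otimes u$. So the compactness/large-$\lambda_{33}$ argument you single out is never needed.
\end{enumerate}
For the same reason, the heuristic ``as before, with aggregate quantities'' is misleading. Only the dissipative term in $K$ aggregates; the candidate subspaces depend on the individual $\ell_i$.

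In the independent case the surviving candidates are exactly the $\sigma_+$ families: $e_1\otimes u$, $V_+$, $\mathbb{C}^2\otimes u$ and $V_+\oplus e_2\otimes u$. Each is dispatched in a line. With \eqref{eq: lam13}--\eqref{eq: lam23}, \eqref{eq: lambda gen} and $s=\sum_ia_ib_ie^{-i\chi_i}$, the block $\langle e_1|\,iH-\frac12\sum_i\ell_i^\dagger\ell_i\otimes\id\,|e_2\rangle$ equals $\left(\begin{smallmatrix} i\alpha_1+\alpha_2-s & e^{i\zeta}\\ 0&0\end{smallmatrix}\right)$.
\begin{enumerate}
\item For $V_+$, this block already gives $K\neq0$.
\item For $e_1\otimes u$, it forces $u=f_2$.
\item On $\mathbb{C}^2\otimes u$ the dissipator drops out. $K=0$ would then make $u$ a common eigenvector of $\sum_j\lambda_{1j}\sigma_j$ and $\sum_j\lambda_{2j}\sigma_j$, whose Bloch vectors are never parallel.
\item For $V_+\oplus e_2\otimes u$, $u$ must be an eigenvector of $2\nu_1\sigma_1+2\nu_2\sigma_2+(\nu_3-\lambda_{33})\sigma_3$, which is nonzero once $\lambda_{33}\neq\nu_3$. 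Requiring the block above to annihilate $u^\perp$ then pins $i\alpha_1+\alpha_2$ to at most two values, which you avoid by detuning.
\end{enumerate}
Finally, $\det\Gamma\neq0$ plays no role in this argument; non-unitality comes from $\cappa\neq0$. The functions you attribute to $\det\Gamma\neq0$ are therefore unsupported.
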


\begin{proof}
We carry on the proof for two noise operators $\ell_1$ and $\ell_2$, the generalization to three noise operators being straightforward.

Assume that both $\ell_1$ and $\ell_2$ are in the normal form \eqref{eq: jump triang}. 
If both $\ell_1$ and $\ell_2$ are singular, then they are proportional to $\sigma_+$, the results of Section~\ref{sec: sigma+} hold. 
If not, assume, without loss of generality, that 
$\ell_1=\left(\begin{smallmatrix} a_1 & b_1e^{-i\chi_1}\\0 & -a_1\end{smallmatrix}
\right)$, with $a_1b_1\neq 0$. 

The lazy subspaces associated with the dynamics must satisfy  \eqref{eq: pigrizia} for both $L_1$ and $L_2$. Then, we just check which of the projectors onto lazy subspaces for $L_1$ also satisfy  \eqref{eq: pigrizia} for $L_2$. 
Straight computations show that, with the possible use of the control to slightly modify $\alpha_1$ or $\alpha_2$, $\mathcal{W}_{\psi_*}$ is the only lazy subspace of the dynamics.

Then GKLS has only one equilibrium point.
\end{proof}

%
%
%
%
%

\section{Examples}  \label{sec: example}

We illustrate the results by a numerical example.  We consider a system of interest composed by two spins {\bf S$_1$} and {\bf S$_2$}, evolving with the Hamiltonian
\[
h_{S}=\frac{5}{2\sqrt{2}}\sigma_z\otimes \id_2+\frac{1}{2\sqrt{2}}\id_2\otimes \sigma_z.
\]
Our aim is to stabilize it to the pure Bell state
\[
\psi_*=\frac{1}{\sqrt{2}}\Big( \left(\begin{smallmatrix}
1\\0                                                                           \end{smallmatrix}
\right) \otimes \left(\begin{smallmatrix}
0\\1                                                                           \end{smallmatrix}
\right) + \left(\begin{smallmatrix}
0\\1                                                                           \end{smallmatrix}
\right)\otimes \left(\begin{smallmatrix}
1\\0                                                                           \end{smallmatrix}
\right) \Big).
\]
Following the indirect stabilization scheme developed in this paper, we couple the system {\bf S}$=${\bf S$_1$+S$_2$} with a dissipative ancilla that evolves with the Hamiltonian $h_A=\frac{1}{2}\left(\begin{smallmatrix}2& 1-i\\ 1+i&-2                                                                                                                                                                                                                                                                \end{smallmatrix}
\right)$ and is subject to the dissipation $\ell=\sigma_+$.

First, we apply to the composite system the base change $\id_2\otimes O$, where
\[
O=\left(
\begin{smallmatrix}\
0 & \frac{1}{\sqrt{2}} & -\frac{1}{\sqrt{2}} &0\\
1 & 0 & 0 & 0\\
0 & 0 & 0 & 1\\
0 & \frac{1}{\sqrt{2}} & \frac{1}{\sqrt{2}} &0
\end{smallmatrix}
\right),
\]
so that $h_S$ becomes $h_S={\rm diag}\big(\frac{3}{\sqrt{2}},\frac{2}{\sqrt{2}},-\frac{2}{\sqrt{2}},-\frac{3}{\sqrt{2}}\big)=3 S_{6}+2S_9$.
Accordingly to the necessary conditions illustrated in Section~\ref{sec: any dim}, we choose the following couplings:
\begin{align}
 h_{\bullet 1}&=(0,2,0) &
  h_{\bullet 3}&=(0,0,0)\\
   h_{\bullet 4}&=(0,3,0) &
   h_{\bullet 6}&=(2,5,-3/\sqrt{2})\\
  h_{\bullet 8}&=(0,0,0)&
   h_{\bullet 9}&=(3,1,-2/\sqrt{2})\\
h_{\bullet 11}&=(5,0,0)&
h_{\bullet 13}&=(0,0,0)\\
h_{\bullet 15}&=(1/\sqrt{6},1/\sqrt{6},2/\sqrt{6}),
\end{align}
(the other ones can be recovered by means of \eqref{eq: vincoletti}, knowing  that $\bx_A^*=(0,0,1/2)$).
These choices are motivated by the fact that the resulting matrix $M_H+M_D$ possesses only one zero eigenvalue, all other having strictly negative real part. This ensures that the target state (of the composite system) is the only equilibrium. 


We randomly choose 10 initial point (we generate a random complex matrix $A$ of dimension 16, then we set $\brho(0)=\frac{A^{\dagger}A}{{\rm tr}(A^{\dagger}A)}$).
Figure~\ref{fig: simu1} displays the temporal evolution of the fidelity\footnote{We recall that the fidelity between two quantum states $\rho,\sigma$ is defined as $F(\rho,\sigma)=\big({\rm tr}\sqrt{\sqrt{\rho}\sigma\sqrt{\rho}}\big)^2$.} $F(t)$ across the ten resulting trajectories.

\begin{figure}[h!]
 \begin{center}
\includegraphics[width=7cm]{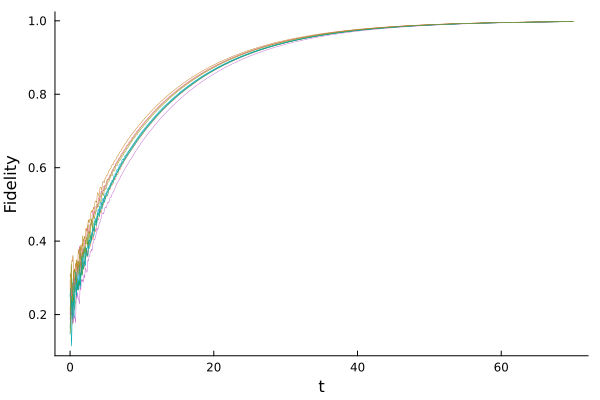}  
 \end{center}
 \caption{Fidelity $F({\rm tr}_A\brho(t),|\psi_*\rangle\langle \psi_*|)$.} \label{fig: simu1}
\end{figure}

 \section{Conclusions}

 This paper establishes some necessary conditions for the stabilization of a $N$-level quantum system coupled to a noisy controlled ancilla qubit, when the target state for the system of interest is chosen to be any pure state. Our analysis highlights the following structural constraint: there must exists a stable pure factorized equilibrium of the composite system, whose reduced state is exactly the target state of the system of interest. This requirement, already observed in the qubit case, persists in any dimension.
 
In the case in which the system of interest is a qubit, we deepen the analysis by providing additional necessary and sufficient conditions, thus giving some explicit choices for the couplings. 


\appendix

\subsection{Standard generators and structure constants of $\mathfrak{su}(N)$} \label{sec: su(N)}

Quite surprisingly, despite their massive use in particle physics, it is quite difficult to find explicit formulas for the standard generators and the structure constants of $\mathfrak{su}(N)$, if $N> 4$. The paper \cite{BossionHuo2021} seems the most complete reference available. For completeness' sake, we recall here below the main definitions and formulas that we are using throughout the paper (in order to work with orthonormal bases of the space of Hermitian matrices, we use a different normalization with respect to the standard one). 

For $N\geq2$, we consider the following  
$N^2-1$ matrices 
\begin{gather}
S_{\alpha_{nm}}=\frac{1}{\sqrt{2}}(|m\rangle \langle n|+|n\rangle \langle m|),\\
S_{\beta_{nm}}=-\frac{i}{\sqrt{2}}(|m\rangle \langle n|-|n\rangle \langle m|),\\
S_{\gamma_{n}}=\frac{1}{\sqrt{n(n-1)}}\Big(\sum_{l=1}^{n-1} |l\rangle \langle l|+(1-n)|n\rangle \langle n|\Big)
\end{gather}
where, for every $1\leq m<n\leq N$, $|n\rangle$ denotes the $n$-th element of the canonical basis of $\mathbb{R}^{N^2-1}$ and the coefficients $\alpha_{nm},\beta_{nm}$ and $\gamma_{n}$ are defined as
\begin{gather}
\alpha_{nm}=n^2+2(m-n)-1,\qquad 
\beta_{nm}=n^2+2(m-n),\\
\gamma_{n}=n^2-1.
\end{gather}
It is easy to verify that they constitute a orthonormal basis of the set of traceless Hermitian $N$-dimensional matrices.

For every $N$, the matrices $S_j$, $j=1,\ldots,N^2-1$ satisfy the following commutation and anticommutation relations:
\begin{gather}
[S_l,S_j]=i\sum_{k=1}^{N^2-1} f_{ljk} S_k\\
\{S_l,S_j\}=\frac{1}{N}\delta_{lj}\id_N+\sum_{k=1}^{N^2-1} d_{ljk} S_k,
\end{gather}
where the constants $f_{ljk}$ are totally antisymmetric and the constants $d_{ljk}$ are totally symmetric. 
Their expression can be found in \cite{BossionHuo2021} (up to a normalization factor). Thanks to a suitable choice of the coordinate basis, in this paper we only use the coefficients whose last index ($k$) is equal to $\gamma_N=N^2-1$. In this case, for every $1\leq m<N$, we have: 
\begin{equation} \label{eq: fijk}
f_{lj\gamma_N}=
\begin{cases}
 \sqrt{\frac{N}{N-1}} & \mbox{ if } l=\alpha_{Nm},j=\beta_{Nm}\\
  -\sqrt{\frac{N}{N-1}} & \mbox{ if } l=\beta_{Nm},j=\alpha_{Nm}\\
  0 & \mbox{otherwise}.
\end{cases}
\end{equation}
Concerning the symmetric constants $d_{lj\gamma_N}$, they are non-zero only if $l=j$ and, in particular, we have:
\begin{equation} \label{eq: dijk}
d_{ll \gamma_N}=\begin{cases}\frac{2}{\sqrt{N(N-1)}} & \mbox{ if } l=\gamma_m, m<N\\
\frac{2-N}{\sqrt{N(N-1)}} & \mbox{ if } \begin{array}{c}l=\alpha_{nm}\mbox{ or } \beta_{nm},\\
 m<n<N                                         \end{array}
\\
\frac{2(2-N)}{\sqrt{N(N-1)}} & \mbox{ if } l=\gamma_N.
\end{cases}
\end{equation}


For $j=1,\ldots,N$, we define the  square $(N^2-1)$-dimensional matrix $C_j$ and $D_j$, setting
\begin{equation} \label{eq: matrici C e D}
(C_j)_{km}=f_{jkm}\qquad
(D_j)_{km}=d_{jkm}.
\end{equation}

\subsection{Lazy subspaces associated with $\sigma_+\otimes \id$} \label{app: lazy}

Following \cite[Lemma~9]{Baumgartner2008b}, a subspace $\mathcal{K}$ of $\mathbb{C}^n$ is a lazy subspace if and only if its associated orthogonal projector satisfies equation~\eqref{eq: pigrizia}. 
To find lazy subspaces associated with $\sigma_+\otimes \id$,  we start by writing the generic  projector $\proj$ on $\mathbb{C}^4$ in the block form  $\proj=\left(\begin{smallmatrix} A & B\\ B^{\dagger} & D\end{smallmatrix}\right)$, where $A,B$ and $D$ are square of dimension 2 and both $A$ and $D$ are Hermitian. Remark that $\proj$ is an orthogonal projector if and only if
\begin{gather}
A^2+B B^{\dagger}=A \label{eq: proj 1} \\
AB+BD=B \label{eq: proj 2} \\
B^{\dagger}B+D^2=D. \label{eq: proj 3}
\end{gather}
To fix notations, we call $a_{ij}$ the matrix elements of $A$ and $d_{ij}$ the matrix elements of $D$, respectively.

For $\ell=\sigma_+$,  equation~\eqref{eq: pigrizia} reads
\begin{equation} \label{eq: pigrizia +}
\begin{pmatrix}
 B^{\dagger} & D \\
 0 & 0
\end{pmatrix}=
\begin{pmatrix}
 AB^{\dagger} & AD \\
 (B^{\dagger})^2 & B^{\dagger}D
\end{pmatrix}
\end{equation}
In particular, $B$ must be nilpotent, that is, either it is the null matrix or has one of the following forms
\[
\left(\begin{smallmatrix}
 \alpha & \beta \\
 -\frac{1}{\beta} & -\alpha
\end{smallmatrix}\right)
\quad 
\left(\begin{smallmatrix}
 0 & \beta \\
 0 & 0
\end{smallmatrix}\right)
\quad \left(\begin{smallmatrix}
 0 & 0 \\
 \beta & 0
\end{smallmatrix}\right),\qquad \alpha\in\{-1,1\},\ \beta\neq 0.
\]
The first option is ruled out by equation~\eqref{eq: proj 3}. 
Let us then assume that $B$ is a nontrivial upper triangular nilpotent matrix. Carrying on all the products, we can see that the matrix elements of $\mathbb{P}$ must satisfy the following constraints:
\[
\begin{cases}
d_{11}=d_{12}=a_{12}=0\\
a_{11}+d_{22}=1\\
a_{22}=1\\
d_{22}^2+|\beta|^2=d_{22}.
\end{cases}
\]
The second order equation for $d_{22}$ has real roots only for $|\beta|\leq 1/2$ and, in this case, the roots are non-negative. Thus we can set $a_{11}=\cos(\theta)^2$ and $d_{22}=\sin(\theta)^2$, for some $\theta\in\mathbb{R}$, which yields $\beta=\sin(\theta)\cos(\theta)e^{i\phi}$, for some $\phi\in\mathbb{R}$. 

For fixed $\theta,\phi$,
$\mathbb{P}$  is the projector on the 2-dimensional space spanned  by the vectors
\[ \Big(\cos(\theta),0,0,
e^{-i\phi}\sin(\theta)\Big)\quad \mbox{ and }\quad  
\big(
0 , 1 , 0 ,0\big). \]
%
 In order to avoid redundancy, we restrict the value of the parameters to $\theta\in(0,\pi/2)$ and $\phi\in[0,2\pi)$.

 If we assume that $B$ is a nontrivial lower triangular nilpotent matrix,  analogous computations  
 identify the 2-dimensional space spanned  by the vectors
\[ \Big(0,\cos(\theta),0,
e^{-i\phi}\sin(\theta),0\Big)\quad \mbox{ and }\quad  
\big(
1, 0 , 0 ,0\big) \]
as lazy subspace for the dynamics. As above, we restrict to $\theta\in(0,\pi/2)$ and $\phi\in[0,2\pi)$.

We finally consider the case in which $B$ is the null matrix. 
We have several cases.
\begin{enumerate}
\item If $D$ is the null matrix, then any choice of  $A$ such that $A^2=A$ provides a projector on a lazy subspace.
In particular, we have
\[
\mathbb{P}=\begin{pmatrix}  \id & 0 \\ 0 & 0\end{pmatrix}\quad \mbox{ and }
\mathbb{P}=\begin{pmatrix} \pur{\psi} & 0\\ 0 &0\end{pmatrix}, |\psi\rangle \in \mathbb{C}^2.
\]
In the first case, the lazy subspace has dimension 2. In the second, it has dimension 1.
 \item If $D$ is not the null matrix, exploiting \eqref{eq: proj 3} and \eqref{eq: pigrizia +}, we see that the following equation must hold
 \[
 (A-\id)D=0.
 \]
 Together with \eqref{eq: proj 1}, this imposes $A=\id$ or $A=D$.

\end{enumerate}

\bibliographystyle{plain}        
\bibliography{../biblio-qopen}           

\begin{thebibliography}{10}

\bibitem{alicki-fannes}
R.~Alicki and M.~Fannes.
\newblock {\em Quantum Dynamical Systems}.
\newblock Oxford University Press, 2001.

\bibitem{altafiniJMP}
C.~Altafini.
\newblock Controllability properties for finite dimensional quantum {M}arkovian
  master equations.
\newblock {\em J. Math. Phys.}, 44(6):2357--2372, 2003.

\bibitem{altaf-tic-12}
C.~Altafini and F.~Ticozzi.
\newblock Modeling and control of quantum systems: An introduction.
\newblock {\em IEEE Transactions on Automatic Control}, 57(8):1898--1917, 2012.

\bibitem{avron}
J.~Avron and O.~Kenneth.
\newblock An elementary introduction to the geometry of quantum states with
  pictures.
\newblock {\em Reviews in Mathematical Physics}, 32(02):2030001, 2020.

\bibitem{Baumgartner2008b}
B.~Baumgartner and H.~Narnhofer.
\newblock Analysis of quantum semigroups with {GKS}-{L}indblad generators:
  {II}. {G}eneral.
\newblock {\em Journal of Physics A: Mathematical and Theoretical},
  41(39):395303, sep 2008.

\bibitem{Baumgartner2008a}
B.~Baumgartner, H.~Narnhofer, and W.~Thirring.
\newblock Analysis of quantum semigroups with {GKS}--{L}indblad generators: I.
  simple generators.
\newblock {\em Journal of Physics A: Mathematical and Theoretical},
  41(6):065201, jan 2008.

\bibitem{BossionHuo2021}
D.~Bossion and P.~Huo.
\newblock General formulas of the structure constants in the $\mathfrak{su}(n)$
  {L}ie {A}lgebra.
\newblock {\em arXiv preprint arXiv:2108.07219}, 2021.

\bibitem{breuer-petr}
H.-P. Breuer and F.~Petruccione.
\newblock {\em The Theory of Open Quantum Systems}.
\newblock Oxford University Press, 2002.

\bibitem{C26}
F.~C. Chittaro.
\newblock Ancilla-assisted stabilization of a qubit.
\newblock {\em IFAC-PapersOnLine}, 2026.
\newblock In press. Presented at the 23rd IFAC World Congress, Busan, South
  Korea.

\bibitem{FroSchu16}
J.~Fröhlich and B.~Schubnel.
\newblock The preparation of states in quantum mechanics.
\newblock {\em Journal of Mathematical Physics}, 57(4):042101, 04 2016.

\bibitem{gorini-kos-sud}
A.~Gorini, A.~Kossakowski, and E.C.G. Sudarshan.
\newblock Completely positive dynamical semigroups of {N}-level systems.
\newblock {\em J. Math. Phys}, 17:821, 1976.

\bibitem{HaMuMu22}
P.M. Harrington, E.J. Mueller, and K.W. Murch.
\newblock Engineered dissipation for quantum information science.
\newblock {\em Nat Rev Phys}, 2022.

\bibitem{dirr-helmke}
I.~Kurniawan, G.~Dirr, and U.~Helmke.
\newblock Controllability aspects of quantum dynamics: a unified approach for
  closed and open systems.
\newblock {\em IEEE Transactions on Automatic Control}, 57(8):1984--1996, 2012.

\bibitem{lindblad}
G.~Lindblad.
\newblock On the generators of quantum dynamical semigroups.
\newblock {\em Comm Math Phys}, 48(2):119--130, 1976.

\bibitem{PhysRevLett.119.150502}
Y.~Lu, S.~Chakram, N.~Leung, N.~Earnest, R.~K. Naik, Z.~Huang, P.~Groszkowski,
  E.~Kapit, J.~Koch, and D.~I. Schuster.
\newblock Universal stabilization of a parametrically coupled qubit.
\newblock {\em Phys. Rev. Lett.}, 119:150502, Oct 2017.

\bibitem{MiLeVo13}
M.~Mirrahimi, Z.~Leghtas, and U.~Vool.
\newblock Quantum reservoir engineering and single qubit cooling.
\newblock {\em IFAC Proceedings Volumes}, 46(23):424--429, 2013.
\newblock 9th IFAC Symposium on Nonlinear Control Systems.

\bibitem{nielsen-chuang}
M.~A. Nielsen and I.~L. Chuang.
\newblock {\em Quantum Computation and Quantum Information}.
\newblock Cambridge University Press, 2000.

\bibitem{RoRoSeIHP}
R.~Robin, P.~Rouchon, and L.A. Sellem.
\newblock Convergence of bipartite open quantum systems stabilized by reservoir
  engineering.
\newblock {\em Ann. Henri Poincaré}, 26:1769--1819, 2025.

\bibitem{SchirmerWang10}
S.~G. Schirmer and X.~Wang.
\newblock Stabilizing open quantum systems by {M}arkovian reservoir
  engineering.
\newblock {\em Phys. Rev. A}, 81:062306, Jun 2010.

\bibitem{ShaHaLeal13}
S.~Shankar, M.~Hatridge, Z.~Leghtas, and f~et~al.
\newblock Autonomously stabilized entanglement between two superconducting
  quantum bits.
\newblock {\em Nature}, 504:419--422, 2013.

\bibitem{TiLuCapVi12}
F.~Ticozzi, R.~Lucchese, P.~Cappellaro, and L.~Viola.
\newblock Hamiltonian control of quantum dynamical semigroups: Stabilization
  and convergence speed.
\newblock {\em IEEE Transactions on Automatic Control}, 57(8):1931--1944, 2012.

\bibitem{TiSchiWa-TAC}
F.~Ticozzi, S.~G. Schirmer, and X.~Wang.
\newblock Stabilizing quantum states by constructive design of open quantum
  dynamics.
\newblock {\em IEEE Transactions on Automatic Control}, 55(12):2901--2905,
  2010.

\bibitem{TiVi08}
F.~Ticozzi and L.~Viola.
\newblock Quantum {M}arkovian {S}ubsystems: Invariance, {A}ttractivity, and
  {C}ontrol.
\newblock {\em IEEE Transactions on Automatic Control}, 53(9):2048--2063, 2008.

\bibitem{TiVi09}
F.~Ticozzi and L.~Viola.
\newblock Analysis and synthesis of attractive quantum {M}arkovian dynamics.
\newblock {\em Automatica}, 45(9):2002--2009, 2009.

\bibitem{reservoir}
F.~Verstraete, M.~Wolf, and J.~Ignacio~Cirac.
\newblock Quantum computation and quantum-state engineering driven by
  dissipation.
\newblock {\em Nature Phys}, (5):633--636, 2009.

\end{thebibliography}

\end{document}